\documentclass[aps,pra,reprint,amsmath,amssymb,longbibliography]{revtex4-2}

\usepackage{amsthm}
\usepackage{mathtools}
\usepackage{microtype}
\usepackage{xcolor}
\usepackage{hyperref}
\usepackage[capitalise,noabbrev]{cleveref}

\hypersetup{
  colorlinks=true,
  linkcolor=blue,
  citecolor=green!50!black,
  urlcolor=blue
}

\newtheorem{theorem}{Theorem}
\newtheorem{lemma}{Lemma}

\newtheorem{corollary}{Corollary}

\newcommand{\tr}{\operatorname{tr}}

\newcommand{\cH}{\mathcal H}
\newcommand{\cS}{\mathcal S}
\newcommand{\ket}[1]{\lvert #1\rangle}
\newcommand{\bra}[1]{\langle #1\rvert}
\newcommand{\braket}[2]{\langle #1\mid #2\rangle}
\newcommand{\norm}[1]{\left\lVert #1\right\rVert}
\newcommand{\opnorm}[1]{\left\lVert #1\right\rVert}
\newcommand{\Tr}{\operatorname{Tr}}
\newcommand{\Span}{\operatorname{span}}
\newcommand{\diag}{\operatorname{diag}}

\begin{document}

\title{Small-Bias Quantum Approximate Counting via the Multiplicative Adversary Method}
\author{Albert Lin}
\email{albertlin2468@gmail.com}

\author{Han-Hsuan Lin}
\email{linhh@cs.nthu.edu.tw}

\affiliation{
National Center for Excellence in Quantum Information Science and Engineering,
National Tsing Hua University,
Hsinchu, Taiwan
}

\begin{abstract}
We study the two-weight decision version of quantum approximate counting:
given oracle access to \(x\in\{0,1\}^N\), distinguish
\[
|x|=M
\qquad\text{from}\qquad
|x|=M+\Delta
\]
with success probability \(1/2+\zeta\). Using the multiplicative adversary
method, we prove
\[
\Omega\left(
\max\left\{
\zeta\frac{\sqrt{(N-M)(M+\Delta)}}{\Delta},
\sqrt{\frac{\zeta N}{\Delta}}
\right\}
\right).
\]
The same parameter dependence follows from the polynomial-method
characterization of the two-layer symmetric function by Podder, Yao, and
Ye. Our contribution is a multiplicative-adversary derivation that keeps
track of the progress produced by individual oracle queries.

For the first term, after complementing the input if necessary, we assume
\(M+\Delta\le N-M\). We use the Hamming-layer subspaces from the
eigenspace method of Ambainis, \v{S}palek, and de Wolf and compose their
adjacent-layer unitary maps to relate the two nonadjacent promise layers.
After fixing the queried coordinate, the analysis block-diagonalizes into
four-dimensional subspaces. An exact calculation of the corresponding
one-query progress ratio gives the first lower bound. The
same estimate also implies
\[
\left\|
(I-\widehat\Pi_{\mathrm{bad}})\ket{\Psi^T}
\right\|^2
=
O\left(
\frac{T^2\Delta^2}{(N-M)(M+\Delta)}
\right)
\]
for the coherent input superposition used in the adversary argument.

For the second term, we prove directly by a three-eigenvalue
multiplicative adversary that unique OR on \(n\) bits with success
probability \(1/2+\zeta\) requires \(\Omega(\sqrt{\zeta n})\) queries,
and then reduce unique OR to the two-weight counting problem.
\end{abstract}

\maketitle

\section{Introduction}

Quantum query lower bounds are fundamental to understanding the computational
power and limitations of quantum computers.  Most standard lower bounds are
formulated in the bounded-error regime \cite{AKKT20,NayakWu99,Amb02}, where
an algorithm is required to succeed with some fixed constant probability,
such as \(2/3\).  In a number of settings, however, it is useful to understand
computation before this constant-success threshold is reached.  This leads
naturally to the \emph{small-bias regime}, in which, for decision problems,
the success probability is
\[
\frac12+\zeta,
\]
where \(\zeta\) may be arbitrarily small
\cite{ASW06,Spa07,PodderYY25}, and one studies the quantitative relation
between the number of quantum queries and the achievable advantage over
random guessing.

Several techniques have been developed for proving quantum query lower
bounds, with different behavior in the small-bias regime.  The polynomial
method originates from the use of real polynomials for Boolean functions
\cite{NisanSzegedy94} and was adapted to quantum query algorithms by Beals,
Buhrman, Cleve, Mosca, and de Wolf \cite{Beals01}.  For symmetric Boolean
functions, approximation results such as Paturi's theorem give particularly
sharp degree bounds \cite{Paturi92}.  The polynomial method applies naturally
to small success probabilities and has been used to obtain fine-grained
query--bias tradeoffs for symmetric functions \cite{PodderYY25}.  The
adversary method was introduced by Ambainis \cite{Amb02} and developed in
weighted and spectral forms \cite{HNS02,BSS03}, whose equivalence was
clarified by \v{S}palek and Szegedy \cite{SpalekSzegedy06}.  While the
standard adversary method is most commonly used in the bounded-error setting,
the eigenspace method \cite{ASW06} and the multiplicative adversary method
\cite{Spa07} provide tools for controlling smaller success probabilities.
Another important technique is the compressed-oracle method, introduced by
Zhandry \cite{Zhandry19}, which is particularly effective for random-oracle
and cryptographic query problems and can handle very small success
probabilities.  Recent work of Jeffery and Zur relates the compressed-oracle
method to a restricted multiplicative-adversary framework
\cite{JefferyZur26}.

One of our motivations for studying the small-bias regime comes from NISQ
and bounded-depth quantum computations \cite{ChenCHL22,HamoudiLS24}.
NISQ algorithms can be modeled as hybrid computations in which a classical
computer repeatedly invokes noisy quantum circuits and processes their
classical outcomes \cite{ChenCHL22}.  Under a restriction on coherent depth,
an individual quantum computation may be unable to solve the underlying
problem with bounded error while still producing a nonzero distinguishing
advantage.  A bounded-error lower bound remains valid in such a model, but by
itself does not quantify this partial progress.  Small-bias bounds provide a
finer description of what can be achieved by such a restricted quantum
computation.  Related questions arise, for example, in the NISQ complexity
of collision finding, where tight bounds have been obtained for several
hybrid and bounded-depth quantum models \cite{HamoudiLS24}.

Small-success-probability bounds also arise naturally in the study of
direct-product theorems, parallel computation, and time--space tradeoffs
\cite{Amb05,KSW07,ASW06,Spa07}.  Strong direct-product theorems study
algorithms that are given substantially fewer resources than would be
required to solve many independent instances reliably; in this regime, the
relevant conclusion is that their success probability becomes exponentially
small rather than merely failing to exceed a fixed constant.  For quantum
search, strong direct-product behavior was first established through
polynomial and adversary-based arguments \cite{Amb05,KSW07}.  Ambainis,
\v{S}palek, and de Wolf then developed an eigenspace-based adversary method
for symmetric functions and used it to obtain strong direct-product theorems
and quantum time--space tradeoffs \cite{ASW06}.  The multiplicative adversary
method was subsequently formulated to control multiplicative progress
directly and to remain effective when the success probability is
exponentially small, a regime in which the ordinary additive adversary bound
can become negligible \cite{Spa07}.  Small-success-probability lower bounds
also arise in collision problems; for example, Hamoudi and Magniez employ
bounds in an exponentially small success-probability regime to derive
quantum time--space tradeoffs for finding multiple collision pairs
\cite{HamoudiMagniez23}.

A further motivation comes from post-quantum cryptography
\cite{Zhandry19,JefferyZur26}.  Cryptographic security requires quantitative
control of an adversary's success probability or distinguishing advantage
even when that advantage is subconstant or negligible.  Collision finding
provides a canonical example because collision-resistant hashing is a basic
cryptographic primitive and quantum algorithms substantially change the
query complexity of collision attacks.  More generally, the
compressed-oracle technique was developed to analyze quantum adversaries in
the quantum random-oracle model and has become an important tool in
post-quantum security proofs \cite{Zhandry19}.

\medskip

Quantum approximate counting is a fundamental problem in quantum query
complexity, as it generalizes unstructured search from finding a marked item
to estimating the number of marked items.  Furthermore, the problem is
closely related to amplitude estimation, a central primitive in quantum
algorithms.  Recent work has also developed simplified approaches to quantum
approximate counting and amplitude estimation that avoid the use of quantum
phase estimation \cite{AaronsonRall20}.

We study the quantum query complexity of approximate counting in the
small-bias regime.  Given oracle access to an input string
\(x\in\{0,1\}^N\), we consider the two-weight decision problem
\[
|x|=M
\qquad\text{or}\qquad
|x|=M+\Delta.
\]
Equivalently, this is the two-layer symmetric promise problem
\(f_N^{M,M+\Delta}\).  In the usual relative-gap parametrization,
\[
\Delta=\epsilon M,
\]
so that the second promised weight is \((1+\epsilon)M\).

For bounded-error algorithms, quantum counting and amplitude-estimation
techniques give the standard upper bound \cite{Bra02}, with a later QFT-free
construction achieving the same query complexity using only Grover
iterations \cite{AaronsonRall20}.  On the lower-bound side, Nayak and Wu
used the polynomial method to derive lower bounds for approximate counting
and related statistics \cite{NayakWu99}.  In the usual regime
\[
\Delta=\epsilon M\le M,
\qquad
M+\Delta\le N-M,
\]
the bounded-error query complexity is
\[
\Theta\left(
\frac{1}{\epsilon}\sqrt{\frac{N}{M}}
\right).
\]
More recent work has extended the polynomial approach to approximate
counting to stronger resource models, including Laurent-polynomial lower
bounds that establish tradeoffs between query complexity and additional
quantum resources \cite{AKKT20}.

Recently, Podder, Yao, and Ye \cite{PodderYY25} characterized the
fine-grained query complexity of two-layer symmetric functions.
Specializing their result to the two weights considered here gives
\begin{equation}
\Theta\left(
\max\left\{
\zeta
\frac{\sqrt{(N-M)(M+\Delta)}}{\Delta},
\sqrt{\frac{\zeta N}{\Delta}}
\right\}
\right),
\label{eq:intro-known-bound}
\end{equation}
up to the trivial constant-query floor.  In the relative-gap regime
\[
\Delta=\epsilon M\le M,
\qquad
M+\Delta\le N-M,
\]
this becomes
\[
\Theta\left(
\max\left\{
\frac{\zeta}{\epsilon}\sqrt{\frac{N}{M}},
\sqrt{\frac{\zeta N}{\epsilon M}}
\right\}
\right).
\]

The purpose of the present work is not to improve the known asymptotic lower
bound.  Rather, we ask whether the same fine-grained behavior can be
understood directly through the multiplicative adversary method.  Our first
contribution is a direct MADV analysis of the two promised Hamming-weight
layers.  After fixing the queried coordinate, the action of one oracle query
is reduced to a \(4\times4\) block.  Directly bounding the one-query
progress ratio
\[
\left\|
G_j^{1/2}WG_j^{-1/2}
\right\|^2
\]
yields
\[
\Omega\left(
\zeta
\frac{\sqrt{(N-M)(M+\Delta)}}{\Delta}
\right).
\]
In contrast to the polynomial approach, this formulation describes the
lower bound through the evolution of a progress quantity under individual
queries.

The same one-query estimate also gives a quantitative statement about the
coherent adversary state.  If \(\widehat\Pi_{\mathrm{bad}}\) denotes the
extension of the projector onto the low-eigenvalue subspace of the adversary
matrix, then after \(T\) queries,
\[
\left\|
(I-\widehat\Pi_{\mathrm{bad}})
\ket{\Psi^T}
\right\|^2
=
O\left(
\frac{T^2\Delta^2}
{(N-M)(M+\Delta)}
\right).
\]
Thus the same calculation bounds how rapidly the coherent state used in the
adversary argument can develop components outside the low-eigenvalue
subspace.

Our second contribution gives an independent MADV derivation of the other
term in \eqref{eq:intro-known-bound}.  We prove directly that unique OR on
\(n\) bits with success probability \(1/2+\zeta\) requires
\[
\Omega(\sqrt{\zeta n})
\]
queries.  The proof uses a three-level multiplicative adversary and a direct
lower bound on the final progress \(W_T\), rather than the bad/good-space
criterion used for the direct counting argument.  A reduction from unique
OR to approximate counting then yields
\[
\Omega\left(
\sqrt{\frac{\zeta N}{\Delta}}
\right).
\]
Combining the two arguments recovers the known optimal fine-grained
small-bias lower bound using multiplicative-adversary methods.

We hope that this viewpoint, together with the explicit query-level analysis
developed here, will be useful in settings where understanding the evolution
of quantum distinguishability is important, including the study of NISQ
complexity, quantum cryptographic applications, and related
resource-constrained models.

\subsection{Proof overview}

Our proof establishes the two lower bounds
\[
\Omega\left(
\zeta
\frac{\sqrt{(N-M)(M+\Delta)}}{\Delta}
\right)
\qquad\text{and}\qquad
\Omega\left(
\sqrt{\frac{\zeta N}{\Delta}}
\right)
\]
separately.  We first obtain the first term through a direct
multiplicative-adversary analysis of the two promised Hamming-weight layers,
and then derive the second term from an independent small-bias lower bound
for unique OR.

For the direct counting argument derived in~\ref{sec:counting-direct}, we first orient the two promise layers,
using bitwise complementation if necessary, so that
\[
M+\Delta\le N-M.
\]
The quantity appearing in the first lower bound is invariant under this
transformation, so this orientation does not change the resulting bound.

The construction builds on the eigenspace decomposition for symmetric
functions used by Ambainis, \v{S}palek, and de Wolf
\cite{ASW06}.  Their analysis treats adjacent Hamming-weight layers.
Here the two promised weights may be separated by an arbitrary gap
\(\Delta\), so we compose the adjacent-layer unitary maps to identify the
corresponding subspaces on the two Hamming-weight layers.

After fixing the queried coordinate, these identifications give an
orthogonal decomposition into four-dimensional subspaces.  On each
nontrivial block, the action of one oracle query can be written explicitly,
and the multiplicative change of the adversary progress is controlled by
analyzing the corresponding \(4\times4\) block.  This direct block analysis
yields the one-query bound needed to obtain
\[
\Omega\left(
\zeta
\frac{\sqrt{(N-M)(M+\Delta)}}{\Delta}
\right).
\]
The same one-query estimate also gives a quantitative bound on the final
weight outside the low-eigenvalue subspace of the adversary matrix:
\[
\left\|
(I-\widehat\Pi_{\mathrm{bad}})
\ket{\Psi^T}
\right\|^2
=
O\left(
\frac{T^2\Delta^2}
{(N-M)(M+\Delta)}
\right).
\]
Thus the direct MADV argument not only recovers the first query lower bound
but also describes how quickly distinguishing components can develop during
the computation.

We next derive the second term through unique OR.  In
Section~\ref{sec:uor}, we give a direct multiplicative-adversary proof that
unique OR on \(n\) bits with success probability \(1/2+\zeta\) requires
\[
\Omega(\sqrt{\zeta n})
\]
queries.  This proof uses a three-level adversary matrix and derives the
required final progress directly from the success condition, rather than
using the bad/good-space criterion of the counting argument.

Finally, in Subsection~\ref{sec:uor-to-counting}, we reduce unique OR to the
two-weight counting problem by repeating the unknown OR input
\(\Delta\) times and appending known bits so that the two cases have Hamming
weights \(M\) and \(M+\Delta\).  Choosing the unique-OR instance size to be
of order \(N/\Delta\) gives
\[
\Omega\left(
\sqrt{\frac{\zeta N}{\Delta}}
\right).
\]
Combining the direct Hamming-layer argument with the unique-OR reduction
therefore yields
\[
\Omega\left(
\max\left\{
\zeta
\frac{\sqrt{(N-M)(M+\Delta)}}{\Delta},
\sqrt{\frac{\zeta N}{\Delta}}
\right\}
\right),
\]
recovering the known fine-grained small-bias lower bound of
\cite{PodderYY25} through multiplicative-adversary arguments.

The remainder of the paper is organized as follows.
Section~\ref{sec:madv-framework} states the common multiplicative-adversary
framework.
Section~\ref{sec:counting-direct} contains the direct MADV analysis of
approximate counting.
Section~\ref{sec:uor} proves the small-bias unique-OR lower bound.
Section~\ref{sec:reduction-final} gives the reduction from unique OR to
approximate counting and combines the two lower bounds.

\section{Multiplicative adversary framework}
\label{sec:madv-framework}

Both promise problems considered in this work have Boolean output, so we state only the Boolean-output version of the multiplicative adversary method. We use the phase-oracle formulation, which is equivalent to the standard bit-query formulation up to input-independent unitaries.

Consider a Boolean promise problem
\[
g:X=X_0\cup X_1\longrightarrow\{0,1\},
\qquad
g(x)=z\quad(x\in X_z),
\]
on \(n\) input bits.  Let
\[
\mathcal H_I=\Span\{\ket{x}:x\in X\}
\]
be an inaccessible input-label register and let \(\mathcal H_A\) contain the
query register, workspace, and output register of the algorithm.

For \(i\in[n]\), define the input-side phase operator
\[
O_i
=
\sum_{x\in X}
(-1)^{x_i}\ket{x}\!\bra{x}.
\]
For a normalized input superposition
\[
\ket{\delta}
=
\sum_{x\in X}\delta_x\ket{x},
\]
run the algorithm coherently.  After \(t\) queries the joint state has the
form
\[
\ket{\Psi^t}
=
\sum_{x\in X}
\delta_x\ket{x}\ket{\psi_x^t}.
\]
Its reduced state on the input register is
\[
\rho_I^t
=
\tr_{\mathcal H_A}
\ket{\Psi^t}\!\bra{\Psi^t}.
\]

\subsection{Query progress}
Let \(\Gamma\succ0\) be an adversary matrix whose smallest eigenvalue is
\(1\), and let \(\ket{\delta}\) be a normalized eigenvector satisfying
\[
\Gamma\ket{\delta}=\ket{\delta}.
\]
For operators on the input register, write
\[
\langle A,B\rangle:=\Tr(A^\dagger B).
\]
Define
\begin{equation}
W_t
=
\langle\Gamma,\rho_I^t\rangle
=
\bra{\Psi^t}(\Gamma\otimes I_A)\ket{\Psi^t}.
\label{eq:madv-progress}
\end{equation}
Then \(W_0=1\).

For one active phase query to coordinate \(i\), define
\begin{equation}
R_i(\Gamma)
=
\sup_{v\ne0}
\frac{
\langle v,O_i^*\Gamma O_i v\rangle
}{
\langle v,\Gamma v\rangle
}
=
\norm{\Gamma^{1/2}O_i\Gamma^{-1/2}}^2,
\label{eq:madv-query-ratio}
\end{equation}
and
\begin{equation}
R
=
\max\left\{
1,\max_{i\in[n]}R_i(\Gamma)
\right\}.
\label{eq:madv-R}
\end{equation}
The extra \(1\) accounts for the inactive phase-query sector.  The
one-query part of the multiplicative adversary method gives
\begin{equation}
W_{t+1}\le RW_t,
\qquad
W_T\le R^T.
\label{eq:madv-query-growth}
\end{equation}

\subsection{Final-state bounds}

Equation~\eqref{eq:madv-query-growth} is the part of the framework that is
common to both lower bounds in this paper.  Once it is proved for a chosen
\(\Gamma\), any independent lower bound
\[
W_T\ge L_{\mathrm{fin}}>1
\]
immediately gives
\begin{equation}
T
\ge
\frac{\log L_{\mathrm{fin}}}{\log R}.
\label{eq:madv-direct-final}
\end{equation}
Thus the final-state lower bound need not be obtained from a bad/good-space
threshold.  The unique-OR proof in Section~\ref{sec:uor} uses
\eqref{eq:madv-direct-final}: it derives \(W_T\ge L_{\mathrm{fin}}\)
directly from the success condition.

For approximate counting we instead use the standard bad/good-space
criterion of \cite{Spa07}.  We recall it next.

For \(z\in\{0,1\}\), let
\[
F_z
=
\sum_{x\in X_z}
\ket{x}\!\bra{x}.
\]
Write the spectral decomposition
\[
\Gamma=\sum_\gamma\gamma\Pi_\gamma.
\]
For \(1<\lambda\le\norm{\Gamma}\), define
\[
\Pi_{\mathrm{bad}}
=
\sum_{\gamma<\lambda}\Pi_\gamma,
\qquad
\Pi_{\mathrm{good}}
=
I-\Pi_{\mathrm{bad}}.
\]
The number \(\eta\) below measures how well a state confined to the bad
subspace can already correlate with either correct output. It is not the
bias parameter \(\zeta\). Let \(P_0,P_1\) be the projectors of the final
output measurement of the algorithm.

For the coherent input distribution \(|\delta_x|^2\), define
\[
\operatorname{Succ}_{\delta}(\mathcal A_T)
=
\sum_{x\in X}
|\delta_x|^2
\norm{P_{g(x)}\ket{\psi_x^T}}^2.
\]

\begin{theorem}[Multiplicative adversary bad/good-space bound
{\cite[Theorem~3 and Corollary~4]{Spa07}}]
\label{thm:madv}
Assume
\[
\norm{F_z\Pi_{\mathrm{bad}}}^2\le\eta
\qquad(z=0,1).
\]
If
\[
\operatorname{Succ}_{\delta}(\mathcal A_T)\ge\eta+4\xi,
\]
then
\[
W_T\ge\xi^2\lambda.
\]
Together with \eqref{eq:madv-query-growth},
\[
T
\ge
\frac{\log(\xi^2\lambda)}{\log R}
\]
whenever \(\xi^2\lambda>1\).
\end{theorem}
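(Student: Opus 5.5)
The plan is to prove the contrapositive: assume $W_T<\xi^2\lambda$ and show that the success probability is then below $\eta+4\xi$. The final state is $\ket{\Psi^T}=\sum_x\delta_x\ket{x}\ket{\psi_x^T}$. Extend the projectors to the joint space as $\widehat\Pi_{\mathrm{bad}}=\Pi_{\mathrm{bad}}\otimes I_A$ and $\widehat\Pi_{\mathrm{good}}=\Pi_{\mathrm{good}}\otimes I_A$. Every eigenvalue of $\Gamma$ on the good subspace is at least $\lambda$, and $\Gamma\succeq 0$, so
\[
W_T=\bra{\Psi^T}(\Gamma\otimes I_A)\ket{\Psi^T}\ge\lambda\norm{\widehat\Pi_{\mathrm{good}}\ket{\Psi^T}}^2 .
\]
Hence $\norm{\widehat\Pi_{\mathrm{good}}\ket{\Psi^T}}^2<\xi^2$, that is, $\norm{\widehat\Pi_{\mathrm{good}}\ket{\Psi^T}}<\xi$, and the state is $\xi$-close in norm to its bad part $\ket{\Phi}:=\widehat\Pi_{\mathrm{bad}}\ket{\Psi^T}$.

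Next, write the success probability as the expectation of a single projector on the joint space:
\[
\operatorname{Succ}_\delta=\bra{\Psi^T}Q\ket{\Psi^T},\qquad Q=\sum_z F_z\otimes P_z .
\]
Since $\sum_z F_z=I_I$ and $\sum_z P_z=I_A$, this $Q$ is an orthogonal projector. Because $\ket{\Psi^T}=\ket{\Phi}+\ket{e}$ with $\norm{e}<\xi$, we get
\[
\norm{Q\Psi^T}\le\norm{Q\Phi}+\xi .
\]
Squaring, and using $\norm{Q\Phi}\le1$, gives
\[
\operatorname{Succ}_\delta\le\norm{Q\Phi}^2+2\xi+\xi^2\le\norm{Q\Phi}^2+3\xi,
\]
valid for $\xi\le1$; for $\xi>1$ the claimed implication is vacuous anyway. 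It therefore remains to show $\norm{Q\Phi}^2\le\eta+\xi$, or at least $\norm{Q\Phi}^2\le\eta$ up to slack absorbed by the constant $4$.

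The main obstacle is bounding $\norm{Q\Phi}^2$ by $\eta$. This requires mixing the two output sectors, and the hypothesis only bounds $F_z\Pi_{\mathrm{bad}}$ separately for each $z$. First expand
\[
\norm{Q\Phi}^2=\sum_z\norm{(F_z\otimes P_z)\Phi}^2 .
\]
A naive bound gives $2\eta$, which is too weak. The cross-structure has to be used instead: $P_0+P_1=I$, so the output register splits $\ket{\Phi}$ into $\ket{\Phi_z}=(I\otimes P_z)\ket{\Phi}$ with $\norm{\Phi_0}^2+\norm{\Phi_1}^2\le1$. The delicate point is that the individual $\ket{\Phi_z}$ need not lie in the bad subspace, since $P_z$ acts only on the $A$ register and commutes with $\widehat\Pi_{\mathrm{bad}}$. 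Because of this commutation, $\ket{\Phi_z}=\widehat\Pi_{\mathrm{bad}}(I\otimes P_z)\ket{\Psi^T}$ does lie in $\mathrm{range}(\Pi_{\mathrm{bad}})\otimes\mathcal H_A$. Consequently
\[
\norm{(F_z\otimes I)\Phi_z}^2\le\norm{F_z\Pi_{\mathrm{bad}}}^2\norm{\Phi_z}^2\le\eta\norm{\Phi_z}^2 .
\]
Summing over $z$ gives $\norm{Q\Phi}^2\le\eta$, and hence $\operatorname{Succ}_\delta<\eta+3\xi\le\eta+4\xi$, which is the contrapositive.

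The final sentence of the theorem then follows immediately. Combining $W_T\ge\xi^2\lambda$ with $W_T\le R^T$ from \eqref{eq:madv-query-growth} and taking logarithms gives $T\ge\log(\xi^2\lambda)/\log R$, which is meaningful when $\xi^2\lambda>1$ (forcing $R>1$). The remaining care is the one flagged above: verifying that the bad projection commutes with the output projectors, which is what makes the per-$z$ bound apply without losing a factor of $2$.
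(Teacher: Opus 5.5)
Your proof is correct and follows essentially the same route as the paper. Both arguments use $W_T\ge\lambda\norm{\widehat\Pi_{\mathrm{good}}\ket{\Psi^T}}^2$ to force the good-space weight below $\xi^2$, and then compare the success probability of $\ket{\Psi^T}$ with that of its bad component, which succeeds with probability at most $\eta$. The differences are cosmetic or in your favor: you argue by contrapositive, you use the unnormalized bad component (which gives $3\xi$ rather than $4\xi$), and you actually prove the bad-state bound $\norm{Q\Phi}^2\le\eta\norm{\Phi}^2$ by commuting $I\otimes P_z$ with $\widehat\Pi_{\mathrm{bad}}$, a step the paper only asserts.
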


For completeness, we record the final-state estimate used in this theorem.
Let
\[
\widehat\Pi_{\mathrm{bad}}
=
\Pi_{\mathrm{bad}}\otimes I_A,
\qquad
\widehat\Pi_{\mathrm{good}}
=
\Pi_{\mathrm{good}}\otimes I_A,
\]
and put
\begin{equation}
\beta
=
\norm{
\widehat\Pi_{\mathrm{good}}\ket{\Psi^T}
}^2.
\label{eq:madv-beta}
\end{equation}
If the bad component is nonzero, normalize it to
\(\ket{\Psi_{\mathrm{bad}}}\).  Orthogonality of the two components gives
\[
\ket{\Psi^T}
=
\sqrt{1-\beta}\ket{\Psi_{\mathrm{bad}}}
+
\sqrt{\beta}\ket{\Psi_{\mathrm{good}}},
\]
hence
\[
\norm{\Psi^T-\Psi_{\mathrm{bad}}}
\le 2\sqrt{\beta}.
\]
Every normalized state in the bad subspace has success probability at most
\(\eta\), while changing a normalized state by Euclidean distance \(d\)
changes the expectation of any projector by at most \(2d\).  Therefore
\[
\operatorname{Succ}_{\delta}(\mathcal A_T)
\le
\eta+4\sqrt{\beta}.
\]
If the success is at least \(\eta+4\xi\), then
\[
\beta\ge\xi^2.
\]
Every eigenvalue of \(\Gamma\) on the good subspace is at least \(\lambda\),
so
\[
W_T
\ge
\lambda\beta
\ge
\xi^2\lambda.
\]
This proves the final-state part of Theorem~\ref{thm:madv}.

\begin{corollary}[Bias-normalized form]
\label{cor:madv-bias}
Suppose
\[
\norm{F_z\Pi_{\mathrm{bad}}}^2\le\frac12
\qquad(z=0,1),
\]
and the algorithm has worst-case success probability at least
\[
\frac12+\zeta.
\]
Then the coherent average success is also at least \(1/2+\zeta\).
Taking
\[
\eta=\frac12,
\qquad
\xi=\frac{\zeta}{4}
\]
in Theorem~\ref{thm:madv}, for every
\[
\frac{16}{\zeta^2}<\lambda\le\norm{\Gamma}
\]
we obtain
\begin{equation}
T
\ge
\frac{
\log\!\left(\zeta^2\lambda/16\right)
}{
\log R
}.
\label{eq:madv-bias-form}
\end{equation}
\end{corollary}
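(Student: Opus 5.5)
This corollary is a direct specialization of Theorem~\ref{thm:madv}. There are three things to check: that the coherent average success inherits the worst-case guarantee, that the chosen parameters satisfy the theorem's success hypothesis, and that the resulting threshold condition \(\xi^2\lambda>1\) matches the stated range of \(\lambda\).

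\emph{Average success.} By definition,
\[
\operatorname{Succ}_{\delta}(\mathcal A_T)=\sum_{x\in X}|\delta_x|^2\norm{P_{g(x)}\ket{\psi_x^T}}^2 .
\]
Each term \(\norm{P_{g(x)}\ket{\psi_x^T}}^2\) is exactly the success probability of the algorithm on input \(x\), which is at least \(1/2+\zeta\) by the worst-case assumption. Since \(\ket{\delta}\) is normalized, the weights \(|\delta_x|^2\) form a probability distribution on \(X\). Hence the convex combination is also at least \(1/2+\zeta\). Nothing about \(\Gamma\) or the eigenvector property of \(\ket{\delta}\) is needed for this step.

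\emph{Parameters.} Set \(\eta=1/2\). Then the hypothesis \(\norm{F_z\Pi_{\mathrm{bad}}}^2\le\eta\) of Theorem~\ref{thm:madv} is exactly the assumption of the corollary. With \(\xi=\zeta/4\) we have \(\eta+4\xi=1/2+\zeta\), so the success condition of Theorem~\ref{thm:madv} holds. The theorem then gives
\[
W_T\ge\xi^2\lambda=\frac{\zeta^2\lambda}{16}
\]
for every admissible \(\lambda\), that is, for \(1<\lambda\le\norm{\Gamma}\).

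\emph{Admissibility and the bound.} The requirement \(\xi^2\lambda>1\) is equivalent to \(\lambda>16/\zeta^2\). Because \(\zeta\le1/2\), we have \(16/\zeta^2\ge64>1\), so any \(\lambda\) in the stated range automatically satisfies \(\lambda>1\) and is admissible. Combining this with \(W_T\le R^T\) from \eqref{eq:madv-query-growth} gives \(R^T\ge\zeta^2\lambda/16>1\).

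We then divide the logarithms, which requires \(\log R>0\), i.e.\ \(R>1\). Recall \(R\ge1\) by \eqref{eq:madv-R}. If \(R=1\), then \(R^T=1<\zeta^2\lambda/16\) for every \(T\), which is a contradiction; this degenerate case means no algorithm exists, and the bound holds vacuously. So we may assume \(R>1\), and taking logarithms yields \eqref{eq:madv-bias-form}.

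The only point needing care is this sign and degeneracy bookkeeping for \(\log R\); the rest is substitution.
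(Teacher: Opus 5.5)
Your proposal is correct and matches the paper's argument: the paper likewise notes that the coherent average success is a convex combination of per-input success probabilities, and then specializes Theorem~\ref{thm:madv} with $\eta=1/2$ and $\xi=\zeta/4$. Your additional checks, that $16/\zeta^2\ge 64>1$ makes every such $\lambda$ admissible and that $R=1$ is ruled out because $W_T\ge\zeta^2\lambda/16>1$, only spell out steps the paper leaves implicit.
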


The two uses of the framework should therefore be kept separate:
Section~\ref{sec:uor} proves a direct lower bound on \(W_T\) from the
success condition and does not use \(\eta\), while the direct counting proof verifies
\(\eta=1/2\) on an explicit bad subspace and then invokes
Corollary~\ref{cor:madv-bias}.

\section{Direct MADV analysis of approximate counting}
\label{sec:counting-direct}

\subsection{Problem setup and Hamming-layer subspaces}
\label{sec:counting-problem}
\label{sec:hamming-layer-subspaces}

We use the subspaces from the symmetric-function eigenspace method of
\cite{ASW06}. Their threshold analysis uses two adjacent Hamming-weight
layers. Here \(T_{j,w}\) and \(S_{j,w}\) are defined on every weight layer
needed to relate \(M\) and \(M+\Delta\).

Let
\begin{align*}
&X_0=\{x\in\{0,1\}^{N}: |x|=M\},\\
&X_1=\{x\in\{0,1\}^{N}: |x|=M+\Delta\},
\end{align*}
and \(X=X_0\cup X_1\). Define
\[
f(x)=
\begin{cases}
0,&x\in X_0,\\
1,&x\in X_1.
\end{cases}
\]
For the technical analysis in this section we assume
\begin{equation}
M+\Delta\le N-M.
\tag{P}
\label{eq:parameter-convention}
\end{equation}
This condition is without loss of generality. If it fails, complement all
input bits and exchange the two output labels. The promised weights become
\[
N-M-\Delta
\qquad\text{and}\qquad
N-M,
\]
with the same gap \(\Delta\). In the phase-oracle model,
\[
O_{\bar x}=-O_x,
\]
so complementation changes only a global phase and preserves the number of
queries. Thus, replacing the lower promised weight \(M\) by
\(N-M-\Delta\) when necessary, we may assume
\eqref{eq:parameter-convention}. Moreover,
\[
(N-M)(M+\Delta)
\]
is invariant under this transformation, so the lower bound below translates
back to the original pair of Hamming weights without changing its value.

\begin{theorem}[Direct multiplicative-adversary lower bound]
\label{thm:counting-direct-lower-bound}
Let \(1\le M<M+\Delta<N\) and \(0<\zeta\le1/2\). Every quantum query
algorithm that distinguishes Hamming weights \(M\) and \(M+\Delta\) with
success probability at least \(1/2+\zeta\) uses
\begin{equation}
Q=
\Omega\left(
\zeta\frac{\sqrt{(N-M)(M+\Delta)}}{\Delta}
\right)
\label{eq:direct-full-bound}
\end{equation}
queries. Equivalently, if \(\Delta=\epsilon M\),
\[
Q=
\Omega\left(
\frac{\zeta}{\epsilon}
\sqrt{\frac{(1+\epsilon)(N-M)}{M}}
\right).
\]
In the regime \(0<\epsilon\le1\) and \(M+\Delta\le N-M\), this simplifies,
up to constant factors, to
\[
\Omega\left(
\frac{\zeta}{\epsilon}\sqrt{\frac{N}{M}}
\right).
\]
\end{theorem}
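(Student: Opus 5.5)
Under the orientation \eqref{eq:parameter-convention}, the plan is to build an adversary matrix \(\Gamma\) on \(\cH_I=\Span\{\ket{x}:x\in X_0\cup X_1\}\) from the Hamming-layer subspaces of \cite{ASW06}, verify \(\norm{F_z\Pi_{\mathrm{bad}}}^2\le\frac12\), bound the one-query ratio \(R\), and then invoke Corollary~\ref{cor:madv-bias}.

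\textbf{Construction of \(\Gamma\).} On each layer \(w\) the symmetric-group decomposition gives subspaces \(S_{j,w}\) for \(0\le j\le\min(w,N-w)\), spanned by states of ``degree'' \(j\). Composing the adjacent-layer unitaries \(U_{w\to w+1}\) of \cite{ASW06} identifies \(S_{j,M}\) with \(S_{j,M+\Delta}\) for every \(j\le M\). For each such \(j\) form the \(2\)-dimensional pairing space and set
\[
\ket{\sigma^\pm_j}=\tfrac{1}{\sqrt2}\bigl(\ket{s_{j,M}}\pm\ket{s_{j,M+\Delta}}\bigr).
\]
Give eigenvalue \(1\) to the \(+\) combinations and to all unpaired pieces of layer \(M+\Delta\) (those with \(j>M\)). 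Give eigenvalue \(\gamma_j\) to the \(-\) combinations, with \(\gamma_j\) growing geometrically in \(j\), say \(\gamma_j=(1+c\,\Delta/\sqrt{(N-M)(M+\Delta)})^{j}\), with a constant \(c\) to be tuned. The start state \(\ket{\delta}\) is the \(+\) combination at \(j=0\), i.e.\ the equal superposition of uniform states on the two layers, which is a \(1\)-eigenvector. Take \(\Pi_{\mathrm{bad}}\) to be the span of all \(+\) vectors together with the \(-\) vectors with \(j<J\). Each \(2\)-dimensional pair contributes exactly weight \(1/2\) to each of \(F_0\) and \(F_1\), so any vector in a span of \(\ket{\sigma^\pm_j}\)'s with only one sign per \(j\) satisfies the required bound. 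More care is needed here: bad vectors mixing \(+\) and \(-\) at the same \(j<J\) give the full pair, so \(F_z\Pi_{\mathrm{bad}}\) could have norm \(1\). I would therefore instead put \emph{only} the \(+\) vectors, plus the high-\(j\) unpaired part, in the bad space, set \(\lambda=\gamma_1\)-level thresholds, and let \(\gamma_j=\lambda\) for all \(-\) vectors. This turns \(\Gamma\) into a two-eigenvalue matrix \(I+(\lambda-1)\Pi_-\), and \(\norm{F_z\Pi_{\mathrm{bad}}}^2\le\frac12\) follows immediately. The unpaired \(j>M\) part lies entirely in \(X_1\), so it must be placed in the \(\lambda\)-space as well, or one shows it is never reached with large amplitude.

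\textbf{One-query ratio (main obstacle).} Fix a coordinate \(i\). Following \cite{ASW06}, splitting each \(S_{j,w}\) according to \(x_i\) gives, for each \(j\), a four-dimensional space spanned by the \(x_i=0\) and \(x_i=1\) components on both layers (the \(4\times4\) block mentioned in the introduction). \(O_i\) acts as \(\diag(1,-1,1,-1)\) in that basis, while \(\Gamma\) is block-diagonal there, with its \(\pm\) structure expressed through the overlaps. These overlaps are \(\sqrt{(w-j)/(N-2j)}\)-type coefficients on each layer. The key is that the overlap angles on layers \(M\) and \(M+\Delta\) differ by \(\theta_j\approx\Delta/\sqrt{(N-M)(M+\Delta)}\), uniformly over \(j\), using \(M+\Delta\le N-M\). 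I then compute \(\norm{\Gamma^{1/2}O_i\Gamma^{-1/2}}^2\) on the block exactly. With \(\Gamma=I+(\lambda-1)P\), the off-diagonal leakage of \(O_i\) between \(+\) and \(-\) is \(\sin\theta_j\), which gives
\[
R\le 1+O(\sqrt\lambda\,\theta)+O(\lambda\theta^2).
\]
The expected obstacle is controlling this uniformly in \(j\), and handling the boundary blocks \(j\) near \(M\) where the block degenerates.

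\textbf{Conclusion.} Choose \(\lambda=C/\zeta^2\) with \(C>16\). Then \(\log(\zeta^2\lambda/16)=\Omega(1)\) and \(\log R=O(\theta/\zeta)\), so Corollary~\ref{cor:madv-bias} gives
\[
T=\Omega(\zeta/\theta)=\Omega\!\left(\zeta\sqrt{(N-M)(M+\Delta)}/\Delta\right).
\]
The same accounting yields the stated bound on \(\norm{(I-\widehat\Pi_{\mathrm{bad}})\ket{\Psi^T}}^2\). Finally, the substitution \(\Delta=\epsilon M\) and the simplification under \(\epsilon\le1\) are routine.
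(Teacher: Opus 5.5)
\textbf{The one-query step fails for the matrix you end up with.} Your bad-space check is fine, and so is the final choice $\lambda=\Theta(\zeta^{-2})$ in Corollary~\ref{cor:madv-bias}. The problem is that the uniformity you call ``the key'' is false, and it cannot be recovered with the spectrum $\Gamma=I+(\lambda-1)\Pi_-$. The $\pm$ leakage in the degree-$j$ block is exactly
\[
\sqrt{d_j^2+e_j^2}=\alpha_0\beta_1-\beta_0\alpha_1=\frac{\Delta}{\sqrt{(N-M-j)(M+\Delta-j)}+\sqrt{(M-j)(N-M-\Delta-j)}} .
\]
This is $\Theta\bigl(\Delta/\sqrt{(N-M)(M+\Delta)}\bigr)$ only while $M-j=\Theta(M)$. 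As $j$ approaches $M$ it grows to order $\sqrt{\Delta/(N-2M)}$.

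The unpaired $X_1$-only pieces $S_{j,M+\Delta}$ with $j>M$ cause the same trouble. As you note, they must sit at eigenvalue $\lambda$, and they are coupled by the query to $S_{M,+}$ with an amplitude that is generically of the same order. Since $R$ is a supremum over all vectors, you cannot argue that these blocks are ``never reached.'' With all $+$ vectors at eigenvalue $1$, the worst block gives $\log R\approx\sqrt\lambda\sqrt{\Delta/(N-2M)}$. This choice of $\Gamma$ therefore yields at most about $\zeta\sqrt{(N-2M)/\Delta}$, which is short of the target by a factor of roughly $\sqrt{M/\Delta}$ when $N\gg M\ge\Delta$.

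\textbf{What is missing is the shape of the spectrum.} The paper keeps only $S_{j,+}$ with $j<L=\max\{1,\lfloor M/4\rfloor\}$ below $\lambda$. Everything else goes to eigenvalue $\lambda$: all $S_{j,-}$, all $S_{j,+}$ with $j\ge L$, and the unpaired pieces. The complement $\mathcal R$ of the blocks $\mathcal B_{j,\ell}$ with $j<L$ is then $O_1$-invariant and lies in the $\lambda$-eigenspace. Its one-query ratio is exactly $1$, so the high-degree and degenerate blocks never need to be analyzed. In the remaining blocks, $M-j>3M/4$ gives the uniform estimate of Theorem~\ref{thm:coefficients}.

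Truncation alone is not enough, however. The query also couples $S_{j,+}$ to $S_{j+1,+}$ with coefficient $\sigma\ge\sqrt{(M-j)(N-M-j)}/(N-2j)$, which is of order at least $\sqrt{M/(N-M)}$ and so is not small. A jump from eigenvalue $1$ to $\lambda$ between $S_{L-1,+}$ and $S_{L,+}$ would cost $(\sqrt\lambda-1)|\sigma|$, losing a factor of order $M/\Delta$ when $\Delta\le M$. This is why the paper grades the bad eigenvalues as $q^j$ with $q=\lambda^{1/L}$. Each rung of the $+$ ladder then costs only $(\sqrt q-1)|\sigma|\le(\sqrt\lambda-1)|\sigma|/L$. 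Using $L=\Theta(M)$ and $M+\Delta\le 2M\Delta$, this is absorbed into the leakage term (Theorem~\ref{thm:normestimate} and \eqref{eq:Aj-final}).
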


For \(0\le w\le N\), let
\[
\mathcal H_w
=
\Span\{\ket{x}:x\in\{0,1\}^N,\ |x|=w\}.
\]
For \(J\subseteq[N]\), with
\[
|J|=j\le\min\{w,N-w\},
\]
define
\[
\ket{\psi_{J,w}}
=
\frac{1}{\sqrt{\binom{N-j}{w-j}}}
\sum_{\substack{x\in\{0,1\}^N\\ |x|=w\\ x_i=1\ \forall i\in J}}
\ket{x}.
\]
Let
\[
T_{j,w}
=
\Span\{\ket{\psi_{J,w}}:J\subseteq[N],\ |J|=j\}.
\]
Note that \(T_{j,w}\subseteq T_{j',w}\) for \(j<j'\), since each
\(\ket{\psi_{J,w}}\) with \(|J|=j\) can be written as a linear combination
of the vectors \(\ket{\psi_{J',w}}\) over extensions \(J'\supseteq J\) with
\(|J'|=j'\).

Define
\[
S_{j,w}
=
T_{j,w}
\cap
T_{j-1,w}^{\perp},
\qquad
T_{-1,w}=\{0\}.
\]
Thus \(S_{j,w}\perp S_{j',w}\) whenever \(j\neq j'\).

Write
\[
\ket{\tilde\psi_{J,w}}
=
\Pi_{T_{j-1,w}^{\perp}}
\ket{\psi_{J,w}},
\qquad
\ket{\ddot\psi_{J,w}}
=
\frac{
\ket{\tilde\psi_{J,w}}
}{
\norm{\ket{\tilde\psi_{J,w}}}
}.
\]
The vectors
\(\ket{\ddot\psi_{J,w}}\) span \(S_{j,w}\), but for \(j>0\) they are
generally not orthogonal.

For the two promise weights, set
\[
w_0=M,
\qquad
w_1=M+\Delta.
\]
For every \(0\le j\le M\), define
\[
S_{j,\pm}
=
\Span\left\{
\frac{
\ket{\ddot\psi_{J,w_0}}
\pm
\ket{\ddot\psi_{J,w_1}}
}{\sqrt2}
:
J\subseteq[N],\ |J|=j
\right\}.
\]
The unitary relation between the two Hamming-weight layers, and the resulting
orthogonal structure of these spaces, is proved below in
Section~\ref{sec:layer-identification}.

\subsection{Adversary matrix and unitary identification}
\label{sec:adversary-matrix}

For adjacent Hamming weights, Claim~16 of \cite{ASW06} gives the unitary
relation used in the threshold analysis. We compose those adjacent maps to
relate the two nonadjacent Hamming-weight layers used here. The spaces
\(S_{j,\pm}\) have already been defined above; the argument below
establishes the unitary relation that gives them the required orthogonal
structure.

Fix \(\theta=1/4\), and set
\[
L=\max\{1,\lfloor\theta M\rfloor\}.
\]
Let \(q>1\) and
\[
\lambda=q^L.
\]
Let \(I_X\) denote the identity on the promise-input space
\[
\mathcal H_X=\mathcal H_M\oplus\mathcal H_{M+\Delta}.
\]
We define
\begin{equation}
\Gamma
=
\sum_{j=0}^{L-1}q^j\Pi_{S_{j,+}}
+
q^L\left(
I_X-\sum_{j=0}^{L-1}\Pi_{S_{j,+}}
\right).
\label{eq:Gamma}
\end{equation}
Thus
\begin{equation}
\mathcal H_{\mathrm{bad}}
=
\bigoplus_{j=0}^{L-1}S_{j,+},
\qquad
\Pi_{\mathrm{bad}}
=
\sum_{j=0}^{L-1}\Pi_{S_{j,+}}.
\label{eq:bad-space}
\end{equation}
The bad projector is independent of \(q\) and \(\lambda\). Once the
orthogonality of the \(S_{j,+}\) spaces is established below, it follows
that \(\Gamma\succ0\), its smallest eigenvalue is \(1\) on \(S_{0,+}\),
and it acts as \(q^L\) on the orthogonal complement of
\(\mathcal H_{\mathrm{bad}}\).

\label{sec:layer-identification}

The next lemma is the adjacent-layer structural input taken from
Claim~16 of \cite{ASW06}; the following theorem is the extension needed here
for a gap of arbitrary size.

\begin{lemma}[Adjacent-layer unitary]
\label{lem:adjacent-layer}
Fix integers
\[
1\le w\le N,
\qquad
0\le j\le\min\{w-1,N-w\}.
\]
There is a unitary
\[
U'_{j,w}:
S_{j,w}
\longrightarrow
S_{j,w-1}
\]
such that, for every \(J\subseteq[N]\) with \(|J|=j\),
\[
U'_{j,w}
\ket{\ddot\psi_{J,w}}
=
\ket{\ddot\psi_{J,w-1}}.
\]
\end{lemma}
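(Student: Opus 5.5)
The plan is to show that the two families \(\{\ket{\ddot\psi_{J,w}}\}_{|J|=j}\) and \(\{\ket{\ddot\psi_{J,w-1}}\}_{|J|=j}\) have \emph{identical Gram matrices}. Once this is known, the linear map defined on the spanning set by \(\ket{\ddot\psi_{J,w}}\mapsto\ket{\ddot\psi_{J,w-1}}\) is well defined. Indeed, if \(\sum_J c_J\ket{\ddot\psi_{J,w}}=0\), then the squared norm \(\sum_{J,J'}\bar c_J c_{J'}G_{JJ'}\) vanishes, and the same quadratic form shows \(\sum_J c_J\ket{\ddot\psi_{J,w-1}}=0\). The map also preserves inner products, so it is an isometry from \(S_{j,w}\) onto the span of the image vectors, which is \(S_{j,w-1}\). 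It is therefore a unitary \(U'_{j,w}\), as claimed. The hypothesis \(j\le\min\{w-1,N-w\}\) ensures that \(j\le\min\{w',N-w'\}\) for both \(w'=w\) and \(w'=w-1\), so both families are defined. (This is exactly Claim~16 of \cite{ASW06}, which we could simply invoke; below is how I would reprove it.)

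To compare the Gram matrices I will use the \(S_N\)-symmetry. Let \(\mathcal M_j=\Span\{e_J:|J|=j\}\) be the permutation module on \(j\)-subsets. Define \(\phi_w:\mathcal M_j\to\mathcal H_w\) by \(e_J\mapsto\ket{\tilde\psi_{J,w}}\). This map is \(S_N\)-equivariant, because permuting coordinates maps \(\ket{\psi_{J,w}}\) to \(\ket{\psi_{\sigma J,w}}\) and preserves each \(T_{j-1,w}\), hence commutes with \(\Pi_{T_{j-1,w}^\perp}\). Consequently \(G_w:=\phi_w^\dagger\phi_w\) is an \(S_N\)-invariant operator on \(\mathcal M_j\).

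The key step is to show that \(G_w\) is supported on the single irreducible component \(S^{(N-j,j)}\subset\mathcal M_j\). First, the vectors \(\ket{\psi_{J,w}}\), \(|J|=j\le\min\{w,N-w\}\), are linearly independent, because the inclusion matrix of \(j\)-sets versus \(w\)-sets has full rank. Hence \(\dim T_{i,w}=\binom Ni\) for \(i\le j\), and \(\dim S_{j,w}=\binom Nj-\binom N{j-1}>0\). Next, \(\phi_w\) kills the copy of \(\mathcal M_{j-1}\subset\mathcal M_j\), namely the span of the vectors \(\sum_{J\supset K}e_J\), because these vectors map into \(T_{j-1,w}\). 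The quotient \(\mathcal M_j/\mathcal M_{j-1}\cong S^{(N-j,j)}\) occurs with multiplicity one in \(\mathcal M_j\). By Schur's lemma, \(G_w=c_wE_j\), where \(E_j\) is the primitive idempotent of the Johnson scheme projecting onto this component. The constant satisfies \(c_w>0\), since \(\phi_w\) has image \(S_{j,w}\neq0\).

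Since the diagonal entries \((E_j)_{JJ}\) are constant in \(J\) by transitivity, every \(\ket{\tilde\psi_{J,w}}\) has the same norm \(\sqrt{c_w(E_j)_{JJ}}\neq0\). The normalized Gram matrix is therefore \(E_j/(E_j)_{JJ}\), which is independent of \(w\). Applying this argument at \(w\) and at \(w-1\) gives equal Gram matrices, which finishes the argument. I expect the main obstacle to be the multiplicity-one identification: one must verify that \(\ker\phi_w\) is exactly \(\mathcal M_{j-1}\) (up to the correct embedding), so that no second isotypic piece survives. I would handle this by the dimension count above, combined with the standard decomposition \(\mathcal M_j=\bigoplus_{i\le j}S^{(N-i,i)}\) for \(j\le N/2\). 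If this step proved cumbersome, I would fall back on computing \(\langle\tilde\psi_{J,w}|\tilde\psi_{J',w}\rangle\) explicitly as a function of \(|J\cap J'|\), as done in \cite{ASW06}.
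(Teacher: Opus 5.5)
Your proof is correct, but it takes a genuinely different route from the paper. The paper gives no independent argument: it simply invokes Claim~16 of \cite{ASW06}, where the unitary is realized concretely as a normalized ``01 map'' between adjacent weight layers.

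You never construct an operator between layers. Instead you show that, for every admissible $w$, the normalized Gram matrix of $\{\ket{\ddot\psi_{J,w}}\}_{|J|=j}$ equals $E_j/(E_j)_{JJ}$. The ingredients are:
\begin{itemize}
\item the $S_N$-equivariance of $e_J\mapsto\ket{\tilde\psi_{J,w}}$;
\item the full-rank theorem for inclusion matrices, which gives $\dim S_{j,w}=\binom Nj-\binom N{j-1}>0$ and hence $\ker\phi_w=\iota(\mathcal M_{j-1})$ exactly;
\item Schur's lemma on the irreducible complement $\iota(\mathcal M_{j-1})^{\perp}\cong S^{(N-j,j)}$.
\end{itemize}
The unitary then comes for free from equality of Gram matrices. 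The obstacle you flag is already settled by your own dimension count. Either ingredient alone gives $G_w\propto E_j$: multiplicity-freeness of $\mathcal M_j$ suffices, and so does the kernel identification together with irreducibility of the complement. The positivity $\dim S_{j,w}>0$ then gives $c_w>0$.

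Your route is $w$-independent in one stroke. It yields Theorem~\ref{thm:layer-identification} directly for any two admissible weights, without composing adjacent maps. It also explains why the inner products $\braket{\ddot\psi_{J,w}}{\ddot\psi_{K,w}}$ cannot depend on $w$: they are entries of a rescaled Johnson-scheme idempotent. The price is importing the inclusion-matrix rank theorem and Young's rule. The 01-map route of \cite{ASW06} is more elementary and produces an explicit intertwiner between layers.
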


\begin{proof}
This is Claim~16 of \cite{ASW06} in the present notation.  Their
normalized \(01\) map sends each normalized projected defining vector at
weight \(w\) to the defining vector with the same set \(J\) at weight
\(w-1\), and the claim proves that this map is unitary.
\end{proof}

\begin{theorem}[Unitary identification between Hamming-weight layers]
\label{thm:layer-identification}
The adjacent-layer maps can be composed to relate arbitrary weights
\(w<w'\).
Fix integers
\[
0\le j\le w<w'\le N-j.
\]
There is a unitary
\[
V_j^{w\to w'}:
S_{j,w}
\longrightarrow
S_{j,w'}
\]
such that, for every \(J\subseteq[N]\) with \(|J|=j\),
\[
V_j^{w\to w'}
\ket{\ddot\psi_{J,w}}
=
\ket{\ddot\psi_{J,w'}}.
\]
Equivalently,
\[
\left\langle
\ddot\psi_{J,w}
\middle|
\ddot\psi_{K,w}
\right\rangle
=
\left\langle
\ddot\psi_{J,w'}
\middle|
\ddot\psi_{K,w'}
\right\rangle
\]
for all \(J,K\subseteq[N]\) of size \(j\).
\end{theorem}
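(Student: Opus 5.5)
The plan is to build \(V_j^{w\to w'}\) by composing inverses of the adjacent-layer unitaries of Lemma~\ref{lem:adjacent-layer}. For each intermediate weight \(u\) with \(w<u\le w'\), apply the lemma with weight \(u\). Its hypothesis is \(1\le u\le N\) and \(0\le j\le\min\{u-1,N-u\}\). This holds because \(j\le w\le u-1\), and because \(u\le w'\le N-j\) gives \(j\le N-u\).

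The lemma then supplies a unitary \(U'_{j,u}:S_{j,u}\to S_{j,u-1}\) with \(U'_{j,u}\ket{\ddot\psi_{J,u}}=\ket{\ddot\psi_{J,u-1}}\) for every \(|J|=j\). Its inverse \((U'_{j,u})^{-1}=(U'_{j,u})^\dagger:S_{j,u-1}\to S_{j,u}\) is again unitary and sends \(\ket{\ddot\psi_{J,u-1}}\) to \(\ket{\ddot\psi_{J,u}}\). One small point to check along the way is that all intermediate spaces \(S_{j,u}\) and vectors \(\ket{\ddot\psi_{J,u}}\) are well defined. This needs \(j\le\min\{u,N-u\}\), which is the same inequality as above. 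It also needs \(\tilde\psi_{J,u}\neq0\); this is implicit in the lemma, since the lemma asserts the normalized vectors exist.

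Next I define
\[
V_j^{w\to w'}=(U'_{j,w'})^{\dagger}(U'_{j,w'-1})^{\dagger}\cdots(U'_{j,w+1})^{\dagger}.
\]
This is a composition of unitaries between the successive spaces \(S_{j,w}\to S_{j,w+1}\to\cdots\to S_{j,w'}\), so it is a unitary \(S_{j,w}\to S_{j,w'}\). A straightforward induction on \(w'-w\) shows that it maps \(\ket{\ddot\psi_{J,w}}\) to \(\ket{\ddot\psi_{J,w'}}\).

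For the equivalent inner-product formulation, the forward direction is immediate: unitaries preserve inner products, so \(\langle\ddot\psi_{J,w}|\ddot\psi_{K,w}\rangle=\langle\ddot\psi_{J,w'}|\ddot\psi_{K,w'}\rangle\). For the converse, equality of Gram matrices of two spanning families implies that the linear map \(\ket{\ddot\psi_{J,w}}\mapsto\ket{\ddot\psi_{J,w'}}\) is well defined and isometric. It is well defined because any linear relation among the first family has zero norm when transported to the second family. It is onto \(S_{j,w'}\) because the \(\ket{\ddot\psi_{J,w'}}\) span that space, so it is unitary.

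I expect no serious obstacle. The only delicate point is the index bookkeeping, namely confirming that the lemma's hypotheses hold at every intermediate weight. The constraint \(w'\le N-j\) in the statement is exactly what guarantees this at the top end.
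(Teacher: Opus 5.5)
Your proposal is correct and follows essentially the same approach as the paper: both invert the adjacent-layer unitaries of Lemma~\ref{lem:adjacent-layer} and compose them along $S_{j,w}\to S_{j,w+1}\to\cdots\to S_{j,w'}$, with the same index check ($j\le u-1$ and $j\le N-u$ at each intermediate weight $u$), and both derive the Gram identity from unitarity. Your extra argument for the converse of the ``equivalently'' clause is correct; the paper leaves it implicit. That argument shows that equal Gram matrices of two spanning families give a well-defined surjective isometry.
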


\begin{proof}
For adjacent weights \(r\) and \(r+1\), where
\[
j\le r<N-j,
\]
Lemma~\ref{lem:adjacent-layer} gives a unitary
\[
U'_{j,r+1}:
S_{j,r+1}
\longrightarrow
S_{j,r}
\]
satisfying
\[
U'_{j,r+1}
\ket{\ddot\psi_{J,r+1}}
=
\ket{\ddot\psi_{J,r}}.
\]
Define
\[
V_j^{r\to r+1}
=
\left(U'_{j,r+1}\right)^*.
\]
Then
\[
V_j^{r\to r+1}
\ket{\ddot\psi_{J,r}}
=
\ket{\ddot\psi_{J,r+1}}.
\]

For arbitrary \(w<w'\), define
\[
V_j^{w\to w'}
=
V_j^{w'-1\to w'}
V_j^{w'-2\to w'-1}
\cdots
V_j^{w\to w+1}.
\]
This is a composition of unitaries, and repeated application of the
adjacent identity gives the stated action on every defining vector.  The
Gram-matrix identity follows because \(V_j^{w\to w'}\) preserves inner
products.
\end{proof}

For \(w>w'\), we use the reverse-direction notation
\begin{equation}
V_j^{w\to w'}:=\left(V_j^{w'\to w}\right)^\dagger,
\qquad
V_j^{w\to w}:=I_{S_{j,w}}.
\label{eq:reverse-layer-map}
\end{equation}
Thus \(V_j^{w\to w'}\) denotes a unitary between any two admissible
Hamming-weight subspaces, in either direction, including the identity when
the two weights coincide.

\subsection{Bad-subspace success bound}
\label{sec:bad-success}

The corresponding construction in \cite{ASW06} gives a \(1/2\)-type
overlap between the symmetric subspace and either threshold layer. For the
two nonadjacent Hamming-weight layers used here, the unitary identification
above gives the following exact operator identity.

\begin{theorem}[Bad-subspace success identity]
\label{thm:bad-success}
For \(z=0,1\),
\[
\Pi_{\mathrm{bad}}F_z\Pi_{\mathrm{bad}}
=
\frac12\Pi_{\mathrm{bad}}.
\]
Consequently,
\[
\norm{F_z\Pi_{\mathrm{bad}}}^2
=
\frac12.
\]
\end{theorem}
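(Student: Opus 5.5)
The plan is to show first that the spaces \(S_{j,+}\), \(0\le j\le L-1\), are mutually orthogonal and that each has a convenient parametrization. Then \(\Pi_{\mathrm{bad}}F_z\Pi_{\mathrm{bad}}\) can be computed block by block. Since \(F_0\) and \(F_1\) are the orthogonal projectors onto \(\mathcal H_{w_0}\) and \(\mathcal H_{w_1}\), and \(F_0+F_1=I_X\), it suffices to prove \(\Pi_{\mathrm{bad}}F_0\Pi_{\mathrm{bad}}=\tfrac12\Pi_{\mathrm{bad}}\); the case \(z=1\) then follows by subtraction from \(\Pi_{\mathrm{bad}}\).

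\textbf{Step 1 (parametrizing \(S_{j,+}\)).} Fix \(j\le L-1<M\). Condition (P) gives \(j\le w_0<w_1\le N-j\), so Theorem~\ref{thm:layer-identification} supplies a unitary \(V_j:=V_j^{w_0\to w_1}:S_{j,w_0}\to S_{j,w_1}\) with \(V_j\ket{\ddot\psi_{J,w_0}}=\ket{\ddot\psi_{J,w_1}}\). Every generator of \(S_{j,+}\) is then \((\ket{u}+V_j\ket{u})/\sqrt2\) with \(\ket u=\ket{\ddot\psi_{J,w_0}}\). By linearity,
\[
S_{j,+}=\{(\ket u+V_j\ket u)/\sqrt2:\ket u\in S_{j,w_0}\}.
\]
The map \(\ket u\mapsto(\ket u+V_j\ket u)/\sqrt2\) is an isometry, because \(\ket u\in\mathcal H_{w_0}\) and \(V_j\ket u\in\mathcal H_{w_1}\) are orthogonal and \(V_j\) is unitary.

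\textbf{Step 2 (orthogonality across \(j\)).} Take \(j\ne j'\), \(\ket u\in S_{j,w_0}\) and \(\ket{u'}\in S_{j',w_0}\). The inner product of the corresponding vectors of \(S_{j,+}\) and \(S_{j',+}\) is
\[
\tfrac12\left(\braket{u}{u'}+\bra{u}V_j^\dagger V_{j'}\ket{u'}\right).
\]
The first term vanishes because \(S_{j,w_0}\perp S_{j',w_0}\). The second term is the inner product of \(V_j\ket u\in S_{j,w_1}\) and \(V_{j'}\ket{u'}\in S_{j',w_1}\), which vanishes because \(S_{j,w_1}\perp S_{j',w_1}\). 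Hence \(\Pi_{\mathrm{bad}}=\sum_j\Pi_{S_{j,+}}\) is a genuine orthogonal projector. Moreover, \(\Pi_{S_{j,+}}=\tfrac12(P_j+V_jP_j+P_jV_j^\dagger+V_jP_jV_j^\dagger)\), where \(P_j\) is the projector onto \(S_{j,w_0}\).

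\textbf{Step 3 (the compression).} For \(\ket a=(\ket u+V_j\ket u)/\sqrt2\) and \(\ket b=(\ket{u'}+V_{j'}\ket{u'})/\sqrt2\) in the bad space,
\[
\bra a F_0\ket b=\tfrac12\braket{u}{u'},
\]
because \(F_0\) retains only the weight-\(w_0\) components. This equals \(\tfrac12\braket ab\), since \(\braket ab=\tfrac12\bigl(\braket u{u'}+\braket{V_ju}{V_{j'}u'}\bigr)\). For \(j=j'\) the two terms are equal by unitarity of \(V_j\); for \(j\ne j'\) both sides vanish by Step 2. Therefore \(\Pi_{\mathrm{bad}}F_0\Pi_{\mathrm{bad}}=\tfrac12\Pi_{\mathrm{bad}}\).

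\textbf{Step 4 (the norm).} We have \(\norm{F_z\Pi_{\mathrm{bad}}}^2=\norm{\Pi_{\mathrm{bad}}F_z\Pi_{\mathrm{bad}}}=\tfrac12\). The operator \(\tfrac12\Pi_{\mathrm{bad}}\) has norm exactly \(\tfrac12\) because \(S_{0,+}\neq\{0\}\).

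\textbf{Main obstacle.} The delicate point is Step 1: verifying that each \(j\le L-1\) meets the admissibility hypotheses of Theorem~\ref{thm:layer-identification} at both weights, namely \(j\le M\) and \(M+\Delta\le N-j\). This needs (P) together with \(L\le\max\{1,M/4\}\). The only edge case is \(L=1\), where only \(j=0\) occurs and the hypotheses hold trivially. The rest is linear algebra.
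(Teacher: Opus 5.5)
Your proof is correct and follows the paper's argument: you transport $S_{j,w_0}$ to $S_{j,w_1}$ by the unitary $V_j$ of Theorem~\ref{thm:layer-identification}, show that $F_z$ compresses each $S_{j,+}$ to $\tfrac12$ times its identity, and remove the $j\neq k$ cross terms using $S_{j,w_z}\perp S_{k,w_z}$. The differences are cosmetic: you compute basis-free instead of with a transported orthonormal basis, you obtain $z=1$ from $z=0$ via $F_0+F_1=I_X$, and you state explicitly the mutual orthogonality of the $S_{j,+}$ and the fact $\Pi_{\mathrm{bad}}\neq 0$, both of which the paper uses without spelling them out.
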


\begin{proof}
Fix \(j<L\).  By Theorem~\ref{thm:layer-identification}, there is a unitary
\[
V_j
=
V_j^{w_0\to w_1}:
S_{j,w_0}
\longrightarrow
S_{j,w_1}
\]
such that
\[
V_j\ket{\ddot\psi_{J,w_0}}
=
\ket{\ddot\psi_{J,w_1}}.
\]
Choose an orthonormal basis
\[
\{\ket{\phi_{j,\ell}}\}_{\ell=1}^{d_j}
\]
of \(S_{j,w_0}\), and define
\[
\ket{\phi_{j,\ell}^{\pm}}
=
\frac{
\ket{\phi_{j,\ell}}
\pm
V_j\ket{\phi_{j,\ell}}
}{\sqrt2}.
\]
These are orthonormal bases of \(S_{j,+}\) and \(S_{j,-}\), respectively.

The first term of
\(\ket{\phi_{j,\ell}^{+}}\) lies in the weight-\(w_0\) layer, and the
second lies in the weight-\(w_1\) layer.  Hence
\[
F_0\ket{\phi_{j,\ell}^{+}}
=
\frac1{\sqrt2}\ket{\phi_{j,\ell}},
\qquad
F_1\ket{\phi_{j,\ell}^{+}}
=
\frac1{\sqrt2}V_j\ket{\phi_{j,\ell}}.
\]
Because both
\(\{\ket{\phi_{j,\ell}}\}_{\ell}\) and
\(\{V_j\ket{\phi_{j,\ell}}\}_{\ell}\) are orthonormal, for every
\(z\in\{0,1\}\),
\[
\bra{\phi_{j,m}^{+}}
F_z
\ket{\phi_{j,\ell}^{+}}
=
\frac12\delta_{m\ell}.
\]
Expanding the two projectors in the orthonormal basis
\(\{\ket{\phi_{j,\ell}^{+}}\}_{\ell}\) gives
\begin{equation}
\begin{aligned}
\Pi_{S_{j,+}}F_z\Pi_{S_{j,+}}
&=
\sum_{m,\ell=1}^{d_j}
\ket{\phi_{j,m}^{+}}
\bra{\phi_{j,m}^{+}}
F_z
\ket{\phi_{j,\ell}^{+}}
\bra{\phi_{j,\ell}^{+}}\\
&=
\frac12
\sum_{\ell=1}^{d_j}
\ket{\phi_{j,\ell}^{+}}
\bra{\phi_{j,\ell}^{+}}\\
&=
\frac12\Pi_{S_{j,+}}.
\end{aligned}
\label{eq:bad-level}
\end{equation}

We now expand the full bad-space compression:
\[
\Pi_{\mathrm{bad}}F_z\Pi_{\mathrm{bad}}
=
\sum_{j,k=0}^{L-1}
\Pi_{S_{j,+}}F_z\Pi_{S_{k,+}}.
\]
If \(j\ne k\), then
\[
F_zS_{j,+}\subseteq S_{j,w_z},
\qquad
F_zS_{k,+}\subseteq S_{k,w_z}.
\]
Since
\[
S_{j,w_z}\perp S_{k,w_z},
\]
for \(u\in S_{j,+}\) and \(v\in S_{k,+}\), the projector identity
\(F_z^2=F_z\) therefore gives
\[
\langle u,F_zv\rangle
=
\langle F_zu,F_zv\rangle
=
0.
\]
Hence
\[
\Pi_{S_{j,+}}F_z\Pi_{S_{k,+}}=0
\qquad
(j\ne k).
\]
Only the diagonal terms remain, and \eqref{eq:bad-level} yields
\[
\begin{aligned}
\Pi_{\mathrm{bad}}F_z\Pi_{\mathrm{bad}}
&=
\sum_{j=0}^{L-1}
\Pi_{S_{j,+}}F_z\Pi_{S_{j,+}}\\
&=
\frac12
\sum_{j=0}^{L-1}
\Pi_{S_{j,+}}\\
&=
\frac12\Pi_{\mathrm{bad}}.
\end{aligned}
\]

Finally, using \(F_z^*=F_z=F_z^2\),
\[
\begin{aligned}
\norm{F_z\Pi_{\mathrm{bad}}}^2
&=
\norm{
(F_z\Pi_{\mathrm{bad}})^*
(F_z\Pi_{\mathrm{bad}})
}\\
&=
\norm{
\Pi_{\mathrm{bad}}F_z^*F_z\Pi_{\mathrm{bad}}
}\\
&=
\norm{
\Pi_{\mathrm{bad}}F_z\Pi_{\mathrm{bad}}
}\\
&=
\norm{\tfrac12\Pi_{\mathrm{bad}}}\\
&=
\frac12.
\end{aligned}
\]
\end{proof}

Thus Corollary~\ref{cor:madv-bias} applies to the adversary matrix below.

\subsection{Fixed-coordinate block diagonalization and the \(4\times4\) query block}
\label{sec:block-diagonalization}

The starting point is the fixed-coordinate decomposition in Appendix~A of
\cite{ASW06}. Claim~15 decomposes a Hamming-layer subspace according to the
value of the queried bit, and Claim~16 supplies the adjacent-layer unitary
maps. Because our promise layers are separated by \(\Delta\), we use
Theorem~\ref{thm:layer-identification} to identify the four fixed-bit subspaces
\(S_{j,w_a,b}\) with a common reference space and transport one orthonormal
basis to all four. This gives mutually orthogonal four-dimensional blocks on
which both the query and the adversary matrix can be analyzed explicitly.

By permutation symmetry, it suffices to analyze a query to coordinate \(1\),
where
\[
O_1\ket{x}=(-1)^{x_1}\ket{x}.
\]

Fix \(0\le j<L\).  For \(a,b\in\{0,1\}\) and
\(J\subseteq\{2,\ldots,N\}\) with \(|J|=j\), define
\[
\ket{\psi_{J,w_a,b}}
=
\frac{1}{\sqrt{\binom{N-1-j}{w_a-b-j}}}
\sum_{\substack{
x\in\{0,1\}^N\\
|x|=w_a,\ x_1=b,\\
x_i=1\ \forall i\in J
}}
\ket{x}.
\]
Let
\[
T_{j,w_a,b}
=
\Span\left\{
\ket{\psi_{J,w_a,b}}:
J\subseteq\{2,\ldots,N\},\ |J|=j
\right\},
\]
and
\[
S_{j,w_a,b}
=
T_{j,w_a,b}\cap T_{j-1,w_a,b}^{\perp},
\qquad
T_{-1,w_a,b}=\{0\}.
\]
Also define
\[
\ket{\widetilde\psi_{J,w_a,b}}
=
\Pi_{T_{j-1,w_a,b}^{\perp}}
\ket{\psi_{J,w_a,b}},
\]
and
\[
\ket{\ddot\psi_{J,w_a,b}}
=
\frac{
\ket{\widetilde\psi_{J,w_a,b}}
}{
\norm{\ket{\widetilde\psi_{J,w_a,b}}}
}.
\]
Note that
\[
\ket{\ddot\psi_{J,w_a,b}}
=
\ket{b}\otimes
\ket{\ddot\psi^{(N-1)}_{J,w_a-b}}.
\]

These are the same Hamming-layer subspaces as before, restricted to inputs with
\(x_1=b\).  Equivalently, among coordinates \(2,\ldots,N\), the required
Hamming weight is \(w_a-b\).  Applying Theorem~\ref{thm:layer-identification} on coordinates
\(2,\ldots,N\), with \(N\) replaced by \(N-1\), and using the
reverse-direction convention \eqref{eq:reverse-layer-map} when necessary,
define the unitary map between the indicated subspaces by
\[
V_j^{a,b}
:=
\bigl(\ket{b}\bra{0}\bigr)
\otimes V_j^{w_0\to w_a-b}:
S_{j,w_0,0}\longrightarrow S_{j,w_a,b}.
\]
Then
\[
V_j^{a,b}
\ket{\ddot\psi_{J,w_0,0}}
=
\ket{\ddot\psi_{J,w_a,b}}
\]
for every \(J\subseteq\{2,\ldots,N\}\) of size \(j\).  We take
\(V_j^{0,0}=I\).

Let
\[
m_j=\dim S_{j,w_0,0}.
\]
Choose an orthonormal basis
\[
\left\{
\ket{e_{j,\ell}}
\right\}_{\ell=1}^{m_j}
\]
of \(S_{j,w_0,0}\), and define
\[
\ket{\phi_{j,\ell}^{a,b}}
=
V_j^{a,b}\ket{e_{j,\ell}}.
\]
Thus, for each fixed \(a,b\),
\[
\left\{
\ket{\phi_{j,\ell}^{a,b}}
\right\}_{\ell=1}^{m_j}
\]
is an orthonormal basis of \(S_{j,w_a,b}\), and the same label \(\ell\)
refers to the vectors matched by these unitary identifications.

For fixed \((j,\ell)\), define
\begin{equation}
\mathcal B_{j,\ell}
=
\Span\left\{
\ket{\phi_{j,\ell}^{0,0}},
\ket{\phi_{j,\ell}^{0,1}},
\ket{\phi_{j,\ell}^{1,0}},
\ket{\phi_{j,\ell}^{1,1}}
\right\}.
\label{eq:block-space}
\end{equation}
These four fixed-bit basis vectors are orthonormal: different \(a\) belong to different
promise-weight layers, different \(b\) have different values of \(x_1\),
and different \(\ell\) are orthogonal within each \(S_{j,w_a,b}\).
Moreover, the spaces \(S_{j,w_a,b}\) for different \(j\) are orthogonal.
Therefore the spaces \(\mathcal B_{j,\ell}\) are mutually orthogonal.

In the ordered basis
\[
\ket{0,0},\quad
\ket{0,1},\quad
\ket{1,0},\quad
\ket{1,1},
\]
where the first coordinate is \(a\) and the second is \(b=x_1\), the
query is
\begin{equation}
Z=\diag(1,-1,1,-1).
\label{eq:Z}
\end{equation}
The following lemma gives the change of basis from these fixed-bit vectors
to the corresponding vectors in \(S_{j,w_a}\) and \(S_{j+1,w_a}\).
Let
\begin{equation}
\alpha_a^2=\frac{N-w_a-j}{N-2j},
\qquad
\beta_a^2=\frac{w_a-j}{N-2j}.
\label{eq:alpha-beta}
\end{equation}

\begin{lemma}[Fixed-coordinate coefficients]
\label{lem:fixed-coordinate}
For each \(a\in\{0,1\}\), the two orthonormal combinations
\[
\alpha_a\ket{\phi_{j,\ell}^{a,0}}
+
\beta_a\ket{\phi_{j,\ell}^{a,1}}
\]
and
\[
\beta_a\ket{\phi_{j,\ell}^{a,0}}
-
\alpha_a\ket{\phi_{j,\ell}^{a,1}}
\]
lie in \(S_{j,w_a}\) and \(S_{j+1,w_a}\), respectively.  The coefficients are
given by \eqref{eq:alpha-beta}.
\end{lemma}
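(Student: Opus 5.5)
The plan is to reduce the statement to a single defining vector \(\ket{\ddot\psi_{J,w_a}}\) with \(1\notin J\), and then transport it to the basis vectors \(\ket{\phi^{a,b}_{j,\ell}}\) by linearity. Fix \(a\) and write \(w=w_a\). For \(J\subseteq\{2,\ldots,N\}\) with \(|J|=j\), split the sum defining \(\ket{\psi_{J,w}}\) according to \(x_1\):
\[
\ket{\psi_{J,w}}
=
\sqrt{\tfrac{N-w}{N-j}}\,\ket{\psi_{J,w,0}}
+
\sqrt{\tfrac{w-j}{N-j}}\,\ket{\psi_{J,w,1}} .
\]
I would then project onto \(T_{j-1,w}^{\perp}\). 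The first thing to check is how this projection interacts with the fixed-bit spaces \(T_{j-1,w,b}\). The inclusion \(T_{j-1,w,0}\oplus T_{j-1,w,1}\subseteq T_{j-1,w}+\{\text{vectors }\ket{1}\otimes\psi_{K,w-1}\text{ with }|K|=j-1\}\) shows that the two spaces do not coincide. So I would instead follow Claim~15 of \cite{ASW06} and compute \(\ket{\tilde\psi_{J,w}}\) explicitly.

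The explicit form comes from the inclusion--exclusion expression of \(\ket{\tilde\psi_{J,w}}\) as an alternating sum \(\sum_{K\subseteq J}c_{|K|}\ket{\psi_{K,w}}\). After splitting each term by the value of \(x_1\), this gives
\[
\ket{\tilde\psi_{J,w}}=c_0'\ket{\tilde\psi_{J,w,0}}+c_1'\ket{\tilde\psi_{J,w,1}} .
\]
I would compute the norms \(\|\tilde\psi_{J,w}\|\) and \(\|\tilde\psi_{J,w,b}\|\) using the standard product formula from \cite{ASW06}. After normalization this yields
\[
\ket{\ddot\psi_{J,w}}=\alpha\ket{\ddot\psi_{J,w,0}}+\beta\ket{\ddot\psi_{J,w,1}},
\]
with \(\alpha^2=\frac{N-w-j}{N-2j}\) and \(\beta^2=\frac{w-j}{N-2j}\). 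The \(N-2j\) denominators should emerge from the ratio of the projected norms, which replaces the naive \(N-j\).

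Transport to the basis then proceeds as follows. By the identities \(V^{a,b}_j\ket{\ddot\psi_{J,w_0,0}}=\ket{\ddot\psi_{J,w_a,b}}\), the linear map \(E:e\mapsto\alpha V^{a,0}_je+\beta V^{a,1}_je\) sends each \(\ket{\ddot\psi_{J,w_0,0}}\) to \(\ket{\ddot\psi_{J,w_a}}\in S_{j,w_a}\). These vectors span \(S_{j,w_0,0}\), so \(E\) maps \(S_{j,w_0,0}\) into \(S_{j,w_a}\). Because \(\alpha^2+\beta^2=1\) and the two images lie in orthogonal \(x_1\)-sectors, \(E\) is an isometry. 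Hence \(E\ket{e_{j,\ell}}=\alpha\ket{\phi^{a,0}_{j,\ell}}+\beta\ket{\phi^{a,1}_{j,\ell}}\) lies in \(S_{j,w_a}\), which is the first claim.

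The second combination \(\beta\ket{\phi^{a,0}}-\alpha\ket{\phi^{a,1}}\) is orthogonal to the first, and I would show it lies in \(S_{j+1,w_a}\) in two parts.
\begin{itemize}
\item \emph{Membership in \(T_{j+1,w_a}\).} Both \(\ket{\phi^{a,0}}\in T_{j,w_a,0}\subseteq T_{j+1,w_a}\) (using \(T_{j}\subseteq T_{j+1}\)) and \(\ket{\phi^{a,1}}\in T_{j,w_a,1}\). The latter is spanned by \(\ket{\psi_{J\cup\{1\},w_a}}\) with \(|J\cup\{1\}|=j+1\), so it too lies in \(T_{j+1,w_a}\).
\item \emph{Orthogonality to \(T_{j,w_a}\).} I would test against the spanning vectors \(\ket{\psi_{K,w_a}}\) with \(|K|=j\). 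If \(1\notin K\), the vector decomposes as above, and the first part of the argument plus orthogonality to \(T_{j-1}\) leaves only the \(S_{j}\)-component, which is orthogonal because the two combinations are orthogonal within the \(4\)-dimensional block. If \(1\in K\), then \(\ket{\psi_{K,w_a}}=\ket1\otimes\ket{\psi_{K\setminus\{1\},w_a-1}}\) lies in \(T_{j-1,w_a,1}\), which is orthogonal to \(S_{j,w_a,1}\), and its \(x_1=0\) part vanishes.
\end{itemize}

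The main obstacle is the exact coefficient computation in the first part, namely verifying that projecting onto \(T_{j-1,w}^{\perp}\) commutes with the \(x_1\)-splitting up to the scalars \(c_b'\), and tracking the norms to get exactly \(N-2j\). I would take this from Claim~15 of \cite{ASW06}, checking only the translation of notation (their \(n\), \(k\) become \(N\), \(w\)), and verify the result on the case \(j=0\) as a sanity check, where \(\alpha^2=(N-w)/N\).
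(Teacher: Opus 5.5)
There is a genuine gap. Your overall architecture is workable: you build the isometry $E=\alpha V^{a,0}_j+\beta V^{a,1}_j$ and then place the orthogonal combination in $T_{j+1,w_a}\cap T_{j,w_a}^{\perp}$. But its foundation, the identity $\ket{\ddot\psi_{J,w}}=\alpha\ket{\ddot\psi_{J,w,0}}+\beta\ket{\ddot\psi_{J,w,1}}$ for $1\notin J$, is false in general once $j\ge1$.

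You correctly noticed that $T_{j-1,w}\subsetneq T_{j-1,w,0}\oplus T_{j-1,w,1}$. That is exactly why projecting onto $T_{j-1,w}^{\perp}$ does \emph{not} commute with the $x_1$-splitting. The leftover space $(T_{j-1,w,0}\oplus T_{j-1,w,1})\ominus T_{j-1,w}$ equals $\Span\{\ket{\tilde\psi_{K,w}}:1\in K\}$, which is a nonzero subspace of $S_{j,w}$, and $\ket{\tilde\psi_{J,w}}$ has a nonzero component in it. Termwise splitting of your inclusion--exclusion sum rescales $c_k$ by $\sqrt{(N-w)/(N-k)}$, resp.\ $\sqrt{(w-k)/(N-k)}$. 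These factors do not reproduce the $(N-1)$-coordinate inclusion--exclusion coefficients.

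Concretely, take $N=4$, $w=2$, $J=\{2\}$. Then
\[
\ket{\tilde\psi_{J,2}}\propto\ket{1100}+\ket{0110}+\ket{0101}-\ket{1010}-\ket{1001}-\ket{0011}.
\]
Its $x_1=1$ part is $\ket{100}-\ket{010}-\ket{001}$ on coordinates $2,3,4$. The coefficients do not sum to zero, so this part is not in $S^{(3)}_{1,1}$. The right-hand side of your identity, by contrast, has both $x_1$-parts in $S^{(N-1)}_{j,\cdot}$.

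So $E$ does not send $\ket{\ddot\psi_{J,w_0,0}}$ to $\ket{\ddot\psi_{J,w_a}}$, and $E(S_{j,w_0,0})\subseteq S_{j,w_a}$ is left unproved. The $1\notin K$ case of your orthogonality bullet reuses the same decomposition and inherits the error. Your $j=0$ check cannot detect any of this, because there is no projection at $j=0$.

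The fix is to project away from the sector-wise lower space instead. Let $P$ be the projector onto $(T_{j-1,w,0}\oplus T_{j-1,w,1})^{\perp}$. Since $P$ acts separately on each $x_1$-sector,
\[
P\ket{\psi_{J,w}}=\sqrt{\tfrac{N-w}{N-j}}\,\ket{\tilde\psi_{J,w,0}}+\sqrt{\tfrac{w-j}{N-j}}\,\ket{\tilde\psi_{J,w,1}}
\]
holds exactly.

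This vector still lies in $S_{j,w}$. It is orthogonal to $T_{j-1,w}$, which sits inside the sector sum. It lies in $T_{j,w}$ because, for $|K|=j-1$ and $1\notin K$:
\begin{itemize}
\item $\ket{1}\otimes\ket{\psi^{(N-1)}_{K,w-1}}=\ket{\psi_{K\cup\{1\},w}}$;
\item $\ket{0}\otimes\ket{\psi^{(N-1)}_{K,w}}$ is a linear combination of $\ket{\psi_{K,w}}$ and $\ket{\psi_{K\cup\{1\},w}}$.
\end{itemize}
The same identity with $|K|=j$ is what actually justifies $T_{j,w,0}\subseteq T_{j+1,w}$ in your first bullet; $T_j\subseteq T_{j+1}$ alone does not give it.

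To get the coefficients, normalize and use $\norm{\tilde\psi^{(n)}_{J,k}}^2=\prod_{i=0}^{j-1}\frac{n-k-i}{n-j-i+1}$. This gives $\norm{\tilde\psi^{(N-1)}_{J,w}}^2/\norm{\tilde\psi^{(N-1)}_{J,w-1}}^2=(N-w-j)/(N-w)$, hence $\alpha^2/\beta^2=(N-w-j)/(w-j)$, which is \eqref{eq:alpha-beta}.

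Replace $\ket{\ddot\psi_{J,w}}$ everywhere by this normalized vector. It is the component of $\ket{\ddot\psi_{J,w}}$ orthogonal to $\Span\{\ket{\tilde\psi_{K,w}}:1\in K\}$. With this change your map $E$ works as intended. In the $1\notin K$ orthogonality check, the second combination is orthogonal to $T_{j-1,w,0}\oplus T_{j-1,w,1}$, so only $P\ket{\psi_{K,w}}$ contributes, and that contribution cancels as $\alpha\beta-\beta\alpha=0$.

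The paper's proof only cites Claim~15 and remarks on comparing fixed-coordinate components ``after projection away from the lower space.'' The lower space for which that comparison is valid is this sector-wise one, not $T_{j-1,w}$.
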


\begin{proof}
This is the fixed-coordinate decomposition used in the threshold analysis of
\cite{ASW06,Spa07}.  In particular, the coefficient calculation is the
single-instance specialization of Claim~15 of \cite{ASW06}.
The two fixed-bit subspaces are orthogonal because they have different values
of \(x_1\), so the change of basis has the form
\[
\begin{pmatrix}
\alpha_a&\beta_a\\
\beta_a&-\alpha_a
\end{pmatrix}.
\]
The squared norm of the \(x_1=1\) component of a degree-\(j\) vector is
\[
\beta_a^2=\frac{w_a-j}{N-2j},
\]
which is exactly the coefficient from Claim~15 of
\cite{ASW06}; equivalently, it follows by comparing the two fixed-coordinate components after projection away from the lower space.  Therefore
\[
\alpha_a^2
=
1-\beta_a^2
=
\frac{N-w_a-j}{N-2j}.
\]
The orthogonal combination belongs to \(S_{j+1,w_a}\).
\end{proof}

Using the unitary identifications between the two promise layers, the
\(\Gamma\)-eigenbasis in \(\mathcal B_{j,\ell}\) is ordered as
\[
S_{j,+},\quad
S_{j+1,+},\quad
S_{j,-},\quad
S_{j+1,-}.
\]
The change-of-basis matrix from this eigenbasis to the fixed-bit basis is
\begin{equation}
U=
\frac1{\sqrt2}
\begin{pmatrix}
\alpha_0&\beta_0&\alpha_0&\beta_0\\
\beta_0&-\alpha_0&\beta_0&-\alpha_0\\
\alpha_1&\beta_1&-\alpha_1&-\beta_1\\
\beta_1&-\alpha_1&-\beta_1&\alpha_1
\end{pmatrix}.
\label{eq:U}
\end{equation}
The matrix is real orthogonal.  In the \(\Gamma\)-eigenbasis, the query $O_1$ is
\begin{equation}
W=U^*ZU.
\label{eq:W-def}
\end{equation}

Let
\[
\mathcal B_{<L}
=
\bigoplus_{j=0}^{L-1}\bigoplus_{\ell=1}^{m_j}\mathcal B_{j,\ell},
\qquad
\mathcal R=\mathcal B_{<L}^{\perp}.
\]
Each \(\mathcal B_{j,\ell}\) is invariant under \(O_1\), since \(O_1\) is
\(\operatorname{diag}(1,-1,1,-1)\) in its fixed-bit basis. Because
\(O_1=O_1^*\), the orthogonal complement \(\mathcal R\) is also
invariant under \(O_1\).

For every \(k<L\), the standard fixed-bit decomposition of
\(S_{k,w_a}\) separates the defining sets \(J\) into the cases
\(1\notin J\) and \(1\in J\). The first case occurs in the
\(j=k\) blocks and the second in the \(j=k-1\) blocks. Hence
\[
S_{k,+}
\subseteq
\left(\bigoplus_{\ell}\mathcal B_{k,\ell}\right)
\oplus
\left(\bigoplus_{\ell}\mathcal B_{k-1,\ell}\right)
\subseteq \mathcal B_{<L},
\]
where the second summand is omitted for \(k=0\).

Thus \(\mathcal R\) is orthogonal to every eigenspace \(S_{k,+}\)
of \(\Gamma\) with eigenvalue \(q^k<q^L\). By the definition of
\(\Gamma\), every vector \(\ket{w}\in\mathcal R\) is a \(q^L\)-eigenvector of
\(\Gamma\); that is,
\[
\Gamma\ket{w}=q^L\ket{w}.
\]
Therefore, for every \(\ket w\in\mathcal R\),
\[
O_1^*\Gamma O_1 \ket w
=
 O_1^* q^L (O_1\ket w)
=
q^L\ket w
=
\Gamma\ket w,
\]
where in the first equality we used that $\mathcal{R}$ is invariant under $O_1$. Hence the one-query ratio on \(\mathcal R\) is \(1\), and only
the blocks \(B_{j,\ell}\) with \(0\le j<L\) require analysis.

For \(\Delta=1\), the existence of an analogous four-dimensional
decomposition is part of the threshold analysis in Appendix~A.4 of
\cite{ASW06}.  The formulas below are the corresponding calculation for two nonadjacent
Hamming weights; the off-diagonal coefficients depend explicitly on the gap
\(\Delta\).

\begin{theorem}[Direct \(4\times4\) query calculation]
\label{thm:wcalc}
Let \(D_j=N-2j\).  Then
\begin{equation}
W=
\begin{pmatrix}
\rho&\sigma&d&e\\
\sigma&-\rho&e&-d\\
d&e&\rho&\sigma\\
e&-d&\sigma&-\rho
\end{pmatrix},
\label{eq:W}
\end{equation}
where
\begin{equation}
\rho=\frac{N-2M-\Delta}{D_j},
\qquad
d=\frac{\Delta}{D_j},
\end{equation}
\begin{equation}
\begin{aligned}
\sigma
=
\frac{1}{D_j}\Bigl[
&\sqrt{(N-M-j)(M-j)}
\\
&+
\sqrt{(N-M-\Delta-j)(M+\Delta-j)}
\Bigr].
\end{aligned}
\label{eq:sigma}
\end{equation}
and
\begin{equation}
\begin{aligned}
e=
\frac{1}{D_j}\Bigl[&
\sqrt{(N-M-j)(M-j)}\\
&-
\sqrt{(N-M-\Delta-j)(M+\Delta-j)}
].
\label{eq:e}
\end{aligned}
\end{equation}
\end{theorem}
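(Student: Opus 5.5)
The plan is to compute \(W=U^*ZU\) directly from \eqref{eq:U} and \eqref{eq:Z}. Since \(U\) is real orthogonal, \(U^*=U^{\mathsf T}\), and \(W_{rs}=\sum_k Z_{kk}U_{kr}U_{ks}\). Because \(Z=\diag(1,-1,1,-1)\), each entry of \(W\) is one half of a signed sum of products of entries of a pair of columns of \(U\). The rows of \(U\) split into the \(a=0\) pair (rows 1, 2) and the \(a=1\) pair (rows 3, 4). Columns 1 and 3, as well as columns 2 and 4, agree on the \(a=0\) rows and differ by a sign on the \(a=1\) rows. Hence every entry has the form \(\tfrac12(X_0\pm X_1)\), where \(X_a\) is the corresponding \(2\times2\) quantity computed with \(\begin{pmatrix}\alpha_a&\beta_a\\ \beta_a&-\alpha_a\end{pmatrix}\) and \(\diag(1,-1)\). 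That \(2\times 2\) computation gives \(\alpha_a^2-\beta_a^2\) on the diagonal and \(2\alpha_a\beta_a\) off the diagonal, with the opposite sign in the \((2,2)\) position.

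The next step is to read off the entries. I expect \(W_{11}=\tfrac12[(\alpha_0^2-\beta_0^2)+(\alpha_1^2-\beta_1^2)]\) and \(W_{13}=\tfrac12[(\alpha_0^2-\beta_0^2)-(\alpha_1^2-\beta_1^2)]\). For the off-diagonal terms, \(W_{12}=\alpha_0\beta_0+\alpha_1\beta_1\) and \(W_{14}=\alpha_0\beta_0-\alpha_1\beta_1\), and the remaining entries follow by the same sign bookkeeping. The sign pattern in \eqref{eq:W} should be checked entry by entry. For instance, \(W_{24}\) pairs column 2 with column 4, and the sign flip of column 4 on the \(a=1\) rows produces \(-d\).

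Substituting \eqref{eq:alpha-beta} with \(w_0=M\) and \(w_1=M+\Delta\) gives \(\alpha_a^2-\beta_a^2=(N-2w_a)/D_j\). Averaging over \(a\) yields \(\rho=(N-2M-\Delta)/D_j\), and half the difference yields \(d=\Delta/D_j\); note that the \(j\)-dependence cancels in both. Similarly, \(\alpha_a\beta_a=\sqrt{(N-w_a-j)(w_a-j)}/D_j\), which gives \(\sigma\) and \(e\) exactly as in \eqref{eq:sigma} and \eqref{eq:e}.

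The main obstacle is not the algebra but the well-definedness of the block representation. Two points must hold. First, the sign convention for \(\alpha_a,\beta_a\) (taken as nonnegative square roots) must be consistent across both layers. Second, the identifications \(V_j^{a,b}\) must send the \(\ell\)-th vector of \(S_{j,w_0}\) to the matching vector of \(S_{j,w_1}\) and also the \(\ell\)-th vector of \(S_{j+1,w_0}\) to the matching vector of \(S_{j+1,w_1}\). This is what guarantees that columns 3 and 4 of \(U\) are exactly the \(S_{j,-}\) and \(S_{j+1,-}\) partners of columns 1 and 2.

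To settle both points, I would invoke Lemma~\ref{lem:fixed-coordinate} together with the fact that the maps of Theorem~\ref{thm:layer-identification} send \(\ket{\ddot\psi_{J,w}}\) to \(\ket{\ddot\psi_{J,w'}}\) with positive coefficient. The Claim~15 decomposition is then compatible with these maps, so the same \(\alpha_a,\beta_a\) relations are transported consistently. Granting this, \eqref{eq:W} is a direct entrywise verification.
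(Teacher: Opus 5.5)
Your proposal is correct and follows the paper's proof. The paper also multiplies out $U^*ZU$ to obtain $\rho=\tfrac12(\alpha_0^2-\beta_0^2+\alpha_1^2-\beta_1^2)$, $\sigma=\alpha_0\beta_0+\alpha_1\beta_1$, $d=\tfrac12(\alpha_0^2-\beta_0^2-\alpha_1^2+\beta_1^2)$ and $e=\alpha_0\beta_0-\alpha_1\beta_1$, then substitutes \eqref{eq:alpha-beta} with $w_0=M$, $w_1=M+\Delta$; your extra discussion of why $U$ is the right change of basis concerns the setup before the theorem, which the paper takes as given, so the statement itself does not need it.
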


\begin{proof}
Multiplying \eqref{eq:U} and \eqref{eq:Z} directly gives
\[
U^*ZU
=
\begin{pmatrix}
\rho&\sigma&d&e\\
\sigma&-\rho&e&-d\\
d&e&\rho&\sigma\\
e&-d&\sigma&-\rho
\end{pmatrix},
\]
with
\[
\rho=
\frac{
\alpha_0^2-\beta_0^2+
\alpha_1^2-\beta_1^2
}{2},
\]
\[
\sigma=\alpha_0\beta_0+\alpha_1\beta_1,
\]
\begin{equation}
d=
\frac{
\alpha_0^2-\alpha_1^2-
\beta_0^2+\beta_1^2
}{2},
\qquad
e=\alpha_0\beta_0-\alpha_1\beta_1.
\label{eq:rho-sigma-de}
\end{equation}
Substituting
\[
\alpha_a^2=\frac{N-w_a-j}{N-2j},
\qquad
\beta_a^2=\frac{w_a-j}{N-2j},
\]
with \(w_0=M\) and \(w_1=M+\Delta\), yields
\[
\rho=\frac{N-2M-\Delta}{N-2j},
\qquad
d=\frac{\Delta}{N-2j}.
\]
Also,
\[
\alpha_a\beta_a
=
\frac{\sqrt{(N-w_a-j)(w_a-j)}}{N-2j},
\]
which gives \eqref{eq:sigma} and \eqref{eq:e}.
\end{proof}

\subsection{One-query block estimate}
\label{sec:blockratio}

In a block with \(0\le j<L\), the eigenvalues of \(\Gamma\) are
\[
q^j,\quad q^{j+1},\quad q^L,\quad q^L.
\]
After dividing by the common factor \(q^j\), set
\begin{equation}
G_j=\diag(1,q,r_j,r_j),
\qquad
r_j=q^{L-j}.
\label{eq:Gj}
\end{equation}

\begin{theorem}[One-query progress ratio on a \(4\times4\) block]
\label{thm:progress-ratio}
The one-query progress ratio on this block is
\begin{equation}
\sup_{v\ne0}
\frac{\langle v,W^*G_jWv\rangle}{\langle v,G_jv\rangle}
=
\norm{K_j}^2,
\qquad
K_j=G_j^{1/2}WG_j^{-1/2}.
\label{eq:block-ratio}
\end{equation}
\end{theorem}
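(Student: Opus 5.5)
The identity is a change of variables followed by the variational characterization of the operator norm, so I would prove it directly in the eigenbasis of the block.

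\emph{Setup.} Since \(q>1\) and \(r_j=q^{L-j}\ge q>0\), the diagonal matrix \(G_j\) is positive definite. Hence \(G_j^{1/2}\) and \(G_j^{-1/2}\) are well-defined diagonal matrices with positive entries, and \(v\mapsto G_j^{1/2}v\) is a bijection of \(\mathbb{C}^4\setminus\{0\}\). It is also worth noting that rescaling \(\Gamma\) by \(q^{-j}\) on the block does not change the ratio: both numerator and denominator are homogeneous of degree one in the adversary matrix. So the block ratio may be computed with \(G_j\) in place of \(\Gamma|_{\mathcal B_{j,\ell}}\).

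\emph{Main step.} Substitute \(u=G_j^{1/2}v\). Because \(G_j^{1/2}\) is Hermitian, the denominator becomes
\[
\langle v,G_jv\rangle=\norm{u}^2 .
\]
For the numerator,
\[
\langle v,W^*G_jWv\rangle=\norm{G_j^{1/2}WG_j^{-1/2}u}^2=\norm{K_ju}^2 .
\]
Therefore the supremum over \(v\ne0\) equals \(\sup_{u\ne0}\norm{K_ju}^2/\norm{u}^2=\norm{K_j}^2\), by the definition of the operator norm. This is the block version of \eqref{eq:madv-query-ratio}, with \(O_i\) replaced by its matrix \(W\) in the \(\Gamma\)-eigenbasis. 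Here \(W\) is real orthogonal and symmetric, and \(W^*=W\) by construction \eqref{eq:W-def}.

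\emph{Justifying the reduction to one block.} One must also check that the block's ratio is meaningful for the global quantity \(R_1(\Gamma)\).
\begin{itemize}
\item Each \(\mathcal B_{j,\ell}\) is invariant under \(O_1\).
\item Each \(\mathcal B_{j,\ell}\) is spanned by \(\Gamma\)-eigenvectors, namely components of \(S_{j,\pm}\) and \(S_{j+1,\pm}\), with the \(-\) spaces and \(S_{L,+}\) carrying eigenvalue \(q^L\).
\item The blocks together with \(\mathcal R\) form an orthogonal decomposition.
\end{itemize}
Consequently both \(\Gamma\) and \(O_1^*\Gamma O_1\) are block diagonal. The global Rayleigh quotient is then a weighted average of the blockwise quotients, so its supremum is the maximum of the block suprema, and the contribution of \(\mathcal R\) is \(1\).

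\emph{Main obstacle.} The only subtle point is verifying the ordering of the eigenvalues \(1,q,r_j,r_j\) in \(G_j\). The \(S_{j+1,+}\) component has eigenvalue \(q^{j+1}\) when \(j+1<L\). When \(j=L-1\) it instead lies in the complement of \(\mathcal H_{\mathrm{bad}}\) and has eigenvalue \(q^L=q^{j+1}\). So the formula \(G_j=\diag(1,q,r_j,r_j)\) holds uniformly in \(j<L\), and the identity follows.
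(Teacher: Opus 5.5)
Your proof is correct and is essentially the paper's argument: the substitution \(u=G_j^{1/2}v\) turns the quotient into \(\norm{K_ju}^2/\norm{u}^2\), which the paper phrases as \(\lambda_{\max}(K_j^*K_j)=\norm{K_j}^2\). Your extra remarks are not needed for this identity itself, which the paper supports in the surrounding text, but they are consistent with the paper: they cover block diagonality and the eigenvalue \(q^{j+1}=q^L\) of the \(S_{L,+}\) component when \(j=L-1\).
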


\begin{proof}
Let \(u=G_j^{1/2}v\).  Then
\[
\frac{\langle v,W^*G_jWv\rangle}{\langle v,G_jv\rangle}
=
\frac{
\langle u,
G_j^{-1/2}W^*G_jWG_j^{-1/2}u
\rangle
}{
\langle u,u\rangle
}.
\]
The matrix in the numerator is \(K_j^*K_j\), so the supremum is
\[
\lambda_{\max}(K_j^*K_j)=\norm{K_j}^2.
\]
\end{proof}

\begin{theorem}[Norm estimate for the \(4\times4\) block]
\label{thm:normestimate}
For the block above,
\begin{equation}
\norm{K_j}
\le
1+
(\sqrt q-1)|\sigma|
+
(\sqrt{r_j}-1)\sqrt{d^2+e^2}.
\label{eq:norm-estimate}
\end{equation}
\end{theorem}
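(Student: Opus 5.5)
Write $G_j^{1/2}=\diag(1,\sqrt q,\sqrt{r_j},\sqrt{r_j})$, so that
\[
(K_j)_{ab}=W_{ab}\sqrt{g_a/g_b}.
\]
The plan is to split $W$ into pieces according to which pairs of eigenvalue levels each entry connects, and then bound each rescaled piece with the triangle inequality. Order the $\Gamma$-eigenbasis as in \eqref{eq:W}: positions $1,2$ are the good-sector pair $(S_{j,+},S_{j+1,+})$ with weights $1,q$, and positions $3,4$ carry weight $r_j$. Decompose
\[
W=W_{\mathrm{diag}}+W_\sigma+W_{de},
\]
where:
\begin{itemize}
\item $W_{\mathrm{diag}}$ contains the $\rho$ entries and the $\sigma$ entries of the lower-right $2\times2$ block;
\item $W_\sigma$ contains the two $\sigma$ entries at positions $(1,2),(2,1)$;
\item $W_{de}$ contains the off-diagonal $2\times2$ blocks $E=\begin{pmatrix} d&e\\ e&-d\end{pmatrix}$ and its transpose (which equals $E$).
\end{itemize}
The lower-right block has equal weights $r_j,r_j$, so conjugation leaves it unchanged. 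Hence $G_j^{1/2}W_{\mathrm{diag}}G_j^{-1/2}=W_{\mathrm{diag}}$. This is block diagonal with the $2\times2$ blocks $\diag(\rho,-\rho)$ and $\begin{pmatrix}\rho&\sigma\\ \sigma&-\rho\end{pmatrix}$.

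It is cleaner to avoid bounding $W_{\mathrm{diag}}$ by itself, since that would lose the sharp constant $1$. Instead, rewrite $K_j=W+(K_j-W)$ and use $\norm{W}=1$, because $W=U^*ZU$ is unitary. The difference $K_j-W$ is supported only on entries joining distinct weights:
\begin{itemize}
\item Entry $(1,2)$ becomes $\sigma(q^{-1/2}-1)$ and entry $(2,1)$ becomes $\sigma(q^{1/2}-1)$. This $2\times2$ antidiagonal matrix has norm $\max\{|q^{-1/2}-1|,\,|q^{1/2}-1|\}\,|\sigma|=(\sqrt q-1)|\sigma|$, using $q>1$.
\item The upper-right block becomes rescaled factors times $E$, and the lower-left block likewise. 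Here entries $(1,3),(1,4)$ pick up $r_j^{-1/2}$, entries $(2,3),(2,4)$ pick up $\sqrt{q/r_j}$, entries $(3,1),(4,1)$ pick up $\sqrt{r_j}$, and entries $(3,2),(4,2)$ pick up $\sqrt{r_j/q}$.
\end{itemize}

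The main obstacle is the cross-block piece: we must show its contribution is at most $(\sqrt{r_j}-1)\sqrt{d^2+e^2}$. Note that $E$ is $\sqrt{d^2+e^2}$ times a reflection, so every row and column of $E$ has Euclidean norm $\sqrt{d^2+e^2}$. The rescaling multiplies rows and columns by factors whose deviation from $1$ is at most $\sqrt{r_j}-1$, using $r_j\ge q$, so that $\sqrt{r_j/q}\ge1$, $\sqrt{q/r_j}\le 1$, and $|c-1|\le\sqrt{r_j}-1$ for each factor $c$.

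The difference cross-block matrix is not simply a scalar multiple of $E$, so I would argue structurally. Write the upper-right difference as $D_1 E$ and the lower-left difference as $D_2E$, where
\[
D_1=\diag(r_j^{-1/2}-1,\ \sqrt{q/r_j}-1),\qquad D_2E=E\,\diag(\sqrt{r_j}-1,\ \sqrt{r_j/q}-1).
\]
The cross piece is then the block antidiagonal matrix $\begin{pmatrix}0&D_1E\\ ED_2'&0\end{pmatrix}$, whose norm is $\max(\norm{D_1E},\norm{ED_2'})$. Since $E$ is $\sqrt{d^2+e^2}$ times an orthogonal matrix, this equals $\sqrt{d^2+e^2}\max(\norm{D_1},\norm{D_2'})\le\sqrt{d^2+e^2}\,(\sqrt{r_j}-1)$.

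The remaining worry is that the $\sigma$ piece and the cross piece are added by the triangle inequality rather than combined in a block-antidiagonal way. The triangle inequality is fine for the claimed bound, so summing gives \eqref{eq:norm-estimate}. I would double-check the case $j=L-1$, where $r_j=q$ and the weight-$q$ level has multiplicity three. The formulas still hold there, with $\sqrt{q/r_j}-1=0$.
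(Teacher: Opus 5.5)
Your proposal is correct and is essentially the paper's proof: you write $K_j=W+(K_j-W)$, use $\norm{W}=1$, and bound the difference by the triangle inequality. The difference splits into the $2\times2$ antidiagonal $\sigma$-piece, of norm $(\sqrt q-1)|\sigma|$, and the block-antidiagonal cross piece $\begin{pmatrix}0&D_1E\\ ED_2'&0\end{pmatrix}$, whose norm is $\max\{\norm{D_1E},\norm{ED_2'}\}\le(\sqrt{r_j}-1)\sqrt{d^2+e^2}$. The differences from the paper are cosmetic: you get $\norm{D_1E}=\sqrt{d^2+e^2}\,\norm{D_1}$ exactly because $E$ is a scaled orthogonal matrix, where the paper uses submultiplicativity, and the initial $W_{\mathrm{diag}}+W_\sigma+W_{de}$ split you set aside is not needed.
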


\begin{proof}
Split the ordered basis
\[
S_{j,+},\quad S_{j+1,+},\quad S_{j,-},\quad S_{j+1,-}
\]
into its first two and last two coordinates, and write
\[
A=
\begin{pmatrix}
\rho&\sigma\\
\sigma&-\rho
\end{pmatrix},
\qquad
C=
\begin{pmatrix}
d&e\\
e&-d
\end{pmatrix},
\qquad
D=\diag(1,q).
\]
Then
\[
W=
\begin{pmatrix}
A&C\\
C&A
\end{pmatrix},
\qquad
G_j=
\begin{pmatrix}
D&0\\
0&r_jI_2
\end{pmatrix}.
\]
Consequently,
\begin{equation}
K_j=G_j^{1/2}WG_j^{-1/2}
=
\begin{pmatrix}
D^{1/2}AD^{-1/2}
&
r_j^{-1/2}D^{1/2}C
\\[0.4em]
r_j^{1/2}CD^{-1/2}
&
A
\end{pmatrix}.
\label{eq:Kj-blocks}
\end{equation}

For notation, define
\[
B_{++}
=
D^{1/2}AD^{-1/2}-A,
\]
\[
B_{+-}
=
r_j^{-1/2}D^{1/2}C-C,
\qquad
B_{-+}
=
r_j^{1/2}CD^{-1/2}-C.
\]
The subscripts only indicate the row and column sectors of the corresponding
\(2\times2\) block.  Equation~\eqref{eq:Kj-blocks} is therefore exactly
\begin{equation}
K_j
=
W+
\begin{pmatrix}
B_{++}&B_{+-}\\
B_{-+}&0
\end{pmatrix}.
\label{eq:Kj-block-decomposition}
\end{equation}
Since \(W=U^*ZU\) is unitary, using the triangle inequality for the operator norm, we have 
\begin{equation}
\norm{K_j}
\le
1+\norm{B_{++}}
+
\norm{
\begin{pmatrix}
0&B_{+-}\\
B_{-+}&0
\end{pmatrix}
}.
\label{eq:Kj-block-triangle}
\end{equation}

Directly from the definition,
\[
B_{++}
=
\begin{pmatrix}
0&(\frac1{\sqrt q}-1)\sigma\\
(\sqrt q-1)\sigma&0
\end{pmatrix}.
\]
For a matrix
\[
\begin{pmatrix}0&a\\b&0\end{pmatrix},
\]
the spectral norm is \(\max\{|a|,|b|\}\).  Hence
\begin{equation}
\norm{B_{++}}
=(\sqrt q-1)|\sigma|.
\label{eq:Bpp}
\end{equation}

Moreover,
\begin{equation}
C^2=(d^2+e^2)I_2,
\qquad
\norm{C}=\sqrt{d^2+e^2}.
\label{eq:Cnorm}
\end{equation}
Using these definitions,
\[
B_{-+}
=
C
\begin{pmatrix}
\sqrt{r_j}-1&0\\
0&\sqrt{r_j/q}-1
\end{pmatrix}.
\]
Since \(r_j\ge q\ge1\),
\begin{equation}
\norm{B_{-+}}
\le
(\sqrt{r_j}-1)\sqrt{d^2+e^2}.
\label{eq:Bmplus}
\end{equation}
Similarly,
\[
B_{+-}
=
\begin{pmatrix}
\frac1{\sqrt{r_j}}-1&0\\
0&\sqrt{q/r_j}-1
\end{pmatrix}C.
\]
Because
\[
\left|1-\frac1{\sqrt{r_j}}\right|
\le\sqrt{r_j}-1,
\qquad
\left|1-\sqrt{\frac q{r_j}}\right|
\le\sqrt{r_j}-1,
\]
we also have
\begin{equation}
\norm{B_{+-}}
\le
(\sqrt{r_j}-1)\sqrt{d^2+e^2}.
\label{eq:Bplusm}
\end{equation}
Finally,
\[
\norm{
\begin{pmatrix}
0&B_{+-}\\
B_{-+}&0
\end{pmatrix}
}
=
\max\{\norm{B_{+-}},\norm{B_{-+}}\}.
\]
Substituting \eqref{eq:Bpp}, \eqref{eq:Bmplus},
and \eqref{eq:Bplusm} into \eqref{eq:Kj-block-triangle} proves
\eqref{eq:norm-estimate}.
\end{proof}

\subsubsection{Coefficient estimates}
\label{sec:coefficients}

The exact \(4\times4\) coefficients retain the dependence on both distances
to the ends of the Hamming cube. This lets us avoid the additional
restriction \(\Delta\le M\).

\begin{theorem}[Coefficient estimates]
\label{thm:coefficients}
Uniformly for \(0\le j<L\),
\begin{equation}
\begin{aligned}
|\sigma|
&=
O\left(
\sqrt{\frac{M+\Delta}{N-M}}
\right),\\
\sqrt{d^2+e^2}
&=
O\left(
\frac{\Delta}{\sqrt{(N-M)(M+\Delta)}}
\right).
\end{aligned}
\label{eq:coefficient-estimates}
\end{equation}
All implicit constants are absolute.
\end{theorem}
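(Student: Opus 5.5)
We bound the two quantities separately, working directly from the explicit formulas in Theorem~\ref{thm:wcalc}. Two facts about $j$ will be used throughout. First, since $0\le j<L\le\max\{1,M/4\}$, we have $j\le M/4$ whenever $M\ge4$, and $j=0$ otherwise. Second, $D_j=N-2j\ge N-M/2$. By \eqref{eq:parameter-convention} we have $N\ge 2M+\Delta\ge 2M$, so $N-M\ge N/2$ and hence $D_j\ge N-M\ge N/2$. All denominators $D_j$ are therefore comparable to $N$ and also to $N-M$.

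For $\sigma$, each square-root term in \eqref{eq:sigma} is at most $\sqrt{(N-M)(M+\Delta)}$, because $N-M-j\le N-M$ and $M+\Delta-j\le M+\Delta$. Hence
\[
|\sigma|\le \frac{2\sqrt{(N-M)(M+\Delta)}}{N-M}=2\sqrt{\frac{M+\Delta}{N-M}}.
\]

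For $d$, we have $d=\Delta/D_j\le \Delta/(N-M)$. Condition \eqref{eq:parameter-convention} gives $M+\Delta\le N-M$, so
\[
\frac{\Delta}{N-M}\le\frac{\Delta}{\sqrt{(N-M)(M+\Delta)}}.
\]

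The main obstacle is $e$, which is a difference of square roots and must exhibit cancellation of order $\Delta$. Set $f(w)=(N-w-j)(w-j)$, so that
\[
e=\frac{\sqrt{f(M)}-\sqrt{f(M+\Delta)}}{D_j}
=\frac{f(M)-f(M+\Delta)}{D_j\,\bigl(\sqrt{f(M)}+\sqrt{f(M+\Delta)}\bigr)}.
\]
The numerator expands to
\[
f(M)-f(M+\Delta)=\Delta\,(2M+\Delta-N),
\]
whose absolute value is at most $\Delta N$. For the denominator we use $\sqrt{f(M)}+\sqrt{f(M+\Delta)}\ge\sqrt{f(M+\Delta)}$, where
\[
f(M+\Delta)=(N-M-\Delta-j)(M+\Delta-j).
\]
We bound the two factors separately. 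For the second factor, $M+\Delta-j\ge \tfrac34 M+\Delta\ge\tfrac34(M+\Delta)$; in the case $j=0$ this is immediate. For the first factor, \eqref{eq:parameter-convention} gives $N-M-\Delta\ge M$, so
\[
N-M-\Delta-j\ge N-M-\Delta-\tfrac{M}{4}.
\]
This is comparable to $N-M$ only when $N-M-\Delta$ is not too small relative to $N-M$. In the extreme case $N-M-\Delta\approx M$, with $\Delta\approx N-2M$, the factor is still at least $\tfrac34 M$. To handle this, I would split into cases. When $\Delta\le (N-M)/2$, the factor is $\Omega(N-M)$. When $\Delta>(N-M)/2$, we instead use the sum $\sqrt{f(M)}+\sqrt{f(M+\Delta)}\ge\sqrt{f(M)}$, where $f(M)\ge (N-M)\cdot\tfrac34 M\cdot\Omega(1)$. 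Alternatively, in that case we bound the numerator more sharply: $|2M+\Delta-N|=N-2M-\Delta$ is small there, since $N-M-\Delta\ge M$ means this equals $(N-M-\Delta)-M$, which is at most $N-M-\Delta$. This gives
\[
|e|\le\frac{\Delta\,(N-M-\Delta)}{D_j\sqrt{(N-M-\Delta-j)(M+\Delta-j)}},
\]
and $N-M-\Delta-j\ge\tfrac34(N-M-\Delta)$ because $j\le M/4\le(N-M-\Delta)/4$. The second route is cleaner and uniform, so I would use it:
\[
|e|\lesssim \frac{\Delta\sqrt{N-M-\Delta}}{(N-M)\sqrt{M+\Delta}}\le\frac{\Delta}{\sqrt{(N-M)(M+\Delta)}}.
\]
Combining this with the bound on $d$ gives the claimed estimate for $\sqrt{d^2+e^2}$, with absolute constants.
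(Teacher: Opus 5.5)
Your proof is correct. For $\sigma$ it is essentially the paper's argument: the paper sets $A=M-j$, $B=N-M-j$ and bounds $D_j=A+B$ below by $B$, while you bound $D_j$ below by $N-M$ using $j\le M/4$; both give the constant $2$.

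For the off-diagonal coefficients your route differs. The paper does not bound $d$ and $e$ separately. It uses $\alpha_a^2+\beta_a^2=1$ to obtain the exact identity $\sqrt{d^2+e^2}=\alpha_0\beta_1-\beta_0\alpha_1$, which is the sine of the difference of the two rotation angles. Rationalizing then gives the closed form $\Delta/\bigl(\sqrt{B(A+\Delta)}+\sqrt{A(B-\Delta)}\bigr)$, because $B(A+\Delta)-A(B-\Delta)=\Delta(A+B)$ cancels the denominator $D_j$ exactly.

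You instead bound $d=\Delta/D_j$ trivially and rationalize $e$ on its own. That produces the numerator $\Delta(N-2M-\Delta)$, which does not cancel $D_j$. What rescues your route uniformly is the observation $N-2M-\Delta\le N-M-\Delta\le\tfrac43(N-M-\Delta-j)$. With it, the numerator is absorbed by the factor $\sqrt{N-M-\Delta-j}$ inside $\sqrt{f(M+\Delta)}$.

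The paper's identity is shorter and yields an exact expression for $\norm{C}$. Yours needs no structural identity, at the cost of one extra comparison and a slightly larger constant from $\sqrt{d^2+e^2}\le|d|+|e|$.

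In a write-up, delete the exploratory case split rather than leaving it as an alternative. The branch $\Delta>(N-M)/2$, handled via $\sqrt{f(M)}$ with the crude numerator $\Delta N$, only gives $|e|\lesssim\Delta/\sqrt{(N-M)M}$. That fails when $\Delta\gg M$; for example, $M=1$ and $\Delta\approx N/2$ gives order $\sqrt N$ against a target of order $1$.
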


\begin{proof}
Set
\[
A=M-j,
\qquad
B=N-M-j.
\]
Then \(N-2j=A+B\), and the convention
\(M+\Delta\le N-M\) implies
\[
A+\Delta\le B,
\qquad
B-\Delta\ge A.
\]
From \eqref{eq:sigma},
\begin{align*}
|\sigma|
&=
\frac{\sqrt{AB}+\sqrt{(B-\Delta)(A+\Delta)}}{A+B}\\
&\le
\frac{\sqrt{AB}+\sqrt{B(A+\Delta)}}{B}\\
&\le
2\sqrt{\frac{A+\Delta}{B}}.
\end{align*}
For \(j<L\), either \(j=0\) or \(j<M/4\). Since
\(N-M\ge M+\Delta\ge M\), this gives uniformly
\[
A+\Delta=\Theta(M+\Delta),
\qquad
B=\Theta(N-M),
\]
and therefore
\[
|\sigma|
=
O\left(\sqrt{\frac{M+\Delta}{N-M}}\right).
\]

For the off-diagonal coefficients, use
\eqref{eq:rho-sigma-de} and \(\alpha_a^2+\beta_a^2=1\). A direct
calculation gives
\begin{equation}
\sqrt{d^2+e^2}
=
\alpha_0\beta_1-\beta_0\alpha_1.
\label{eq:de-angle-identity}
\end{equation}
The right-hand side is nonnegative because
\(\beta_1\ge\beta_0\) and \(\alpha_0\ge\alpha_1\).
Substituting the coefficients from \eqref{eq:alpha-beta},
\begin{align*}
\sqrt{d^2+e^2}
&=
\frac{\sqrt{B(A+\Delta)}-\sqrt{A(B-\Delta)}}{A+B}\\
&=
\frac{\Delta}{
\sqrt{B(A+\Delta)}+\sqrt{A(B-\Delta)}
}\\
&\le
\frac{\Delta}{\sqrt{B(A+\Delta)}}.
\end{align*}
Using again
\(B=\Theta(N-M)\) and
\(A+\Delta=\Theta(M+\Delta)\) proves the second estimate in
\eqref{eq:coefficient-estimates}.
\end{proof}

Let
\[
A_j
=
(\sqrt q-1)|\sigma|
+
(\sqrt{r_j}-1)\sqrt{d^2+e^2}.
\]
Theorems~\ref{thm:progress-ratio}, \ref{thm:normestimate}, and
\ref{thm:coefficients} give
\[
R_j\le(1+A_j)^2
\]
and
\begin{equation}
\begin{aligned}
A_j
\le C\Biggl[
&(\sqrt q-1)
\sqrt{\frac{M+\Delta}{N-M}}\\
&+
(\sqrt\lambda-1)
\frac{\Delta}{\sqrt{(N-M)(M+\Delta)}}
\Biggr].
\end{aligned}
\label{eq:Aj-pre}
\end{equation}
Since \(q^L=\lambda\),
\[
\sqrt q=(\sqrt\lambda)^{1/L}.
\]
For \(y\ge1\), the factorization
\[
y-1=(y^{1/L}-1)
\left(1+y^{1/L}+\cdots+y^{(L-1)/L}\right)
\]
implies
\begin{equation}
y^{1/L}-1\le\frac{y-1}{L}.
\label{eq:root-bound}
\end{equation}
Applying this with \(y=\sqrt\lambda\), and using
\[
\frac{M}{8}\le L\le M
\qquad(M\ge1),
\]
gives
\[
\sqrt q-1
\le
C\frac{\sqrt\lambda-1}{M}.
\]
Moreover, because \(M,\Delta\ge1\),
\[
M+\Delta\le2M\Delta.
\]
Consequently,
\[
(\sqrt q-1)
\sqrt{\frac{M+\Delta}{N-M}}
\le
C'(\sqrt\lambda-1)
\frac{\Delta}{\sqrt{(N-M)(M+\Delta)}}.
\]
Substituting this into \eqref{eq:Aj-pre},
\begin{equation}
A_j
\le
C''
(\sqrt\lambda-1)
\frac{\Delta}{\sqrt{(N-M)(M+\Delta)}}.
\label{eq:Aj-final}
\end{equation}
Therefore
\begin{equation}
\begin{aligned}
\log R
&\le
2\max_j\log(1+A_j)\\
&\le
2\max_j A_j\\
&\le
C_0
(\sqrt\lambda-1)
\frac{\Delta}{\sqrt{(N-M)(M+\Delta)}}.
\end{aligned}
\label{eq:logR}
\end{equation}

\subsection{Small-bias lower bound}
\label{sec:main-proof}

\begin{proof}[Proof of Theorem~\ref{thm:counting-direct-lower-bound}]
By Theorem~\ref{thm:bad-success}, the hypotheses of
Corollary~\ref{cor:madv-bias} hold. Choose
\[
\lambda=\frac{64}{\zeta^2},
\qquad
q=\lambda^{1/L}.
\]
Then
\[
\log\!\left(\frac{\zeta^2\lambda}{16}\right)=\log4,
\qquad
\sqrt\lambda-1\le\frac8\zeta.
\]
By \eqref{eq:logR},
\[
\log R
\le
C_1
\frac{\Delta}{\zeta\sqrt{(N-M)(M+\Delta)}}.
\]
Hence Corollary~\ref{cor:madv-bias} gives
\[
Q
\ge
\frac{\log4}{\log R}
\ge
c\zeta
\frac{\sqrt{(N-M)(M+\Delta)}}{\Delta}.
\]
This proves \eqref{eq:direct-full-bound} under the convention
\eqref{eq:parameter-convention}. If the original pair does not satisfy that
convention, complementing all bits replaces \(M\) by
\(N-M-\Delta\). The product in the numerator is unchanged:
\[
(N-(N-M-\Delta))(N-M)
=(M+\Delta)(N-M).
\]
Thus the same bound holds for the original pair of Hamming weights.

If \(\Delta=\epsilon M\), then
\[
\frac{\sqrt{(N-M)(M+\Delta)}}{\Delta}
=
\frac1\epsilon
\sqrt{\frac{(1+\epsilon)(N-M)}{M}},
\]
which gives the stated equivalent form.
\end{proof}

\subsection{A bound on the final good-space weight}

\label{sec:good-space-weight}

The following statement concerns the coherent input superposition used in the
adversary argument. Define
\begin{equation}
\ket{u_a}
=
\frac1{\sqrt{|X_a|}}
\sum_{x\in X_a}\ket{x},
\qquad
\ket{\delta}
=
\frac{\ket{u_0}+\ket{u_1}}{\sqrt2}
\in S_{0,+}.
\label{eq:delta}
\end{equation}
For a \(T\)-query algorithm, let \(\ket{\psi_x^T}\) be its final accessible
state on input \(x\), and run the algorithm coherently from \(\ket{\delta}\):
\begin{equation}
\begin{aligned}
\ket{\Psi^T}
=
\frac{1}{\sqrt2}\Biggl(
&\frac{1}{\sqrt{|X_0|}}
\sum_{x\in X_0}\ket{x}\ket{\psi_x^T}
\\
&+
\frac{1}{\sqrt{|X_1|}}
\sum_{x\in X_1}\ket{x}\ket{\psi_x^T}
\Biggr).
\end{aligned}
\label{eq:coherent-final}
\end{equation}
Let
\[
\widehat\Pi_{\mathrm{bad}}
=
\Pi_{\mathrm{bad}}\otimes I_A.
\]

\begin{corollary}[Good-space weight after \(T\) queries]
\label{cor:good-space-weight}
Under the convention \eqref{eq:parameter-convention}, there is an absolute
constant \(C>0\) such that
\begin{equation}
\beta_T
:=
\left\|
(I-\widehat\Pi_{\mathrm{bad}})\ket{\Psi^T}
\right\|^2
\le
C
\frac{T^2\Delta^2}{(N-M)(M+\Delta)}.
\label{eq:good-space-weight}
\end{equation}
If \(\widehat\Pi_{\mathrm{bad}}\ket{\Psi^T}\ne0\) and
\[
\ket{\Psi_{\mathrm{bad}}}
=
\frac{\widehat\Pi_{\mathrm{bad}}\ket{\Psi^T}}
{\|\widehat\Pi_{\mathrm{bad}}\ket{\Psi^T}\|},
\]
then
\[
\|\ket{\Psi^T}-\ket{\Psi_{\mathrm{bad}}}\|^2
\le2\beta_T.
\]
\end{corollary}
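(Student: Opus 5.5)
The plan is to run the multiplicative-adversary progress bound with the parameter \(\lambda\) left free, combine it with the spectral lower bound \(W_T\ge\lambda\beta_T\), and then optimize over \(\lambda\).

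\textbf{Step 1 (progress lower bound).} Take any \(\lambda>1\) and set \(q=\lambda^{1/L}\). Since \(\ket{\delta}\in S_{0,+}\) is a \(1\)-eigenvector of \(\Gamma\), we have \(W_0=1\). The bad projector does not depend on \(\lambda\). On the good subspace every eigenvalue of \(\Gamma\) equals \(\lambda\), and all other eigenvalues are at least \(1\). Hence, exactly as in the final-state estimate after Theorem~\ref{thm:madv},
\[
W_T\ge 1-\beta_T+\lambda\beta_T=1+(\lambda-1)\beta_T .
\]

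\textbf{Step 2 (progress upper bound).} By \eqref{eq:madv-query-growth} we have \(W_T\le R^T\). Write \(\kappa=\Delta/\sqrt{(N-M)(M+\Delta)}\). Inequality \eqref{eq:logR} gives \(\log R\le C_0(\sqrt\lambda-1)\kappa\). Combining with Step 1,
\[
\beta_T\le\frac{\exp\bigl(C_0T(\sqrt\lambda-1)\kappa\bigr)-1}{\lambda-1}.
\]

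\textbf{Step 3 (choice of \(\lambda\)).} Substitute \(s=\sqrt\lambda-1\), so that \(\lambda-1=s(s+2)\ge s^2\). If \(T\kappa\le1\), choose \(s=1/(T\kappa)\). The exponent is then the constant \(C_0\), and
\[
\beta_T\le(e^{C_0}-1)\,T^2\kappa^2 ,
\]
which is \eqref{eq:good-space-weight}. If \(T\kappa>1\), the right-hand side of \eqref{eq:good-space-weight} exceeds \(C\) for any \(C\ge1\), and the claim holds trivially because \(\beta_T\le1\). Taking \(C=\max\{1,e^{C_0}-1\}\) covers both cases.

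The main point to verify is that \eqref{eq:logR} holds for arbitrary \(\lambda>1\), not only for the specific choice \(\lambda=64/\zeta^2\). It does: that derivation used only \(q^L=\lambda\), the root bound \eqref{eq:root-bound}, and the estimates in Theorem~\ref{thm:coefficients}, all of which are independent of \(\zeta\). The inactive-query sector contributes ratio \(1\) and does not affect the bound.

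\textbf{Second claim.} Put \(\ket{\Psi^T}=\sqrt{1-\beta_T}\ket{\Psi_{\mathrm{bad}}}+\sqrt{\beta_T}\ket{\Psi_{\mathrm{good}}}\) with orthogonal components. Then
\[
\|\Psi^T-\Psi_{\mathrm{bad}}\|^2=(1-\sqrt{1-\beta_T})^2+\beta_T\le\beta_T+\beta_T=2\beta_T,
\]
using \(1-\sqrt{1-\beta}\le\beta\le\sqrt\beta\) for \(\beta\in[0,1]\), so that \((1-\sqrt{1-\beta_T})^2\le\beta_T\).
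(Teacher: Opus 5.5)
Your proof is correct and follows the paper's argument: you sandwich $W_T$ between the spectral lower bound coming from the good-space weight and the upper bound $R^T$, using the bound on $\log R$ that holds for every $\lambda>1$ because $\Pi_{\mathrm{bad}}$ depends only on $L$, and your second claim uses the same orthogonal-decomposition computation. The only difference is how you choose $\lambda$. The paper takes $\lambda=4/\beta_T$, which forces $W_T\ge 4$, gives $\log 4\le T\log R$ and $C=(2C_0/\log 4)^2$, and needs no case split; you instead take $\sqrt\lambda=1+1/(T\kappa)$ and split on whether $T\kappa\le 1$ (the case $T=0$ needs no choice of $\lambda$, since Step~2 already gives $\beta_0\le 0$).
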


\begin{proof}
The projector \(\Pi_{\mathrm{bad}}\) depends only on \(L\), not on \(q\)
or \(\lambda\). If \(\beta_T=0\), the claim is immediate. Otherwise set
\[
\lambda=\frac4{\beta_T},
\qquad
q=\lambda^{1/L}.
\]
The initial state \(\ket\delta\) lies in \(S_{0,+}\), so \(W_0=1\), and the
one-query estimate gives
\[
W_T\le R^T.
\]
Every vector orthogonal to the bad subspace has adversary eigenvalue
\(\lambda\), hence
\[
W_T
=
\bra{\Psi^T}(\Gamma\otimes I_A)\ket{\Psi^T}
\ge
\lambda\beta_T
=4.
\]
Therefore
\[
\log4\le T\log R.
\]
Applying \eqref{eq:logR} with \(\lambda=4/\beta_T\),
\[
\begin{aligned}
\log R
&\le
C_0\left(\frac2{\sqrt{\beta_T}}-1\right)
\frac{\Delta}{\sqrt{(N-M)(M+\Delta)}}\\
&\le
\frac{2C_0}{\sqrt{\beta_T}}
\frac{\Delta}{\sqrt{(N-M)(M+\Delta)}}.
\end{aligned}
\]
Rearranging proves \eqref{eq:good-space-weight}.

For the final statement, decompose
\[
\ket{\Psi^T}
=
\sqrt{1-\beta_T}\ket{\Psi_{\mathrm{bad}}}
+
\sqrt{\beta_T}\ket{\Psi_{\mathrm{good}}},
\]
where the two normalized states are orthogonal. Then
\[
\|\ket{\Psi^T}-\ket{\Psi_{\mathrm{bad}}}\|^2
=
2(1-\sqrt{1-\beta_T})
\le2\beta_T.
\]
\end{proof}

\section{Small-bias unique OR}
\label{sec:uor}

This section gives an independent multiplicative-adversary proof for the
promise \(\{0^n,e_1,\ldots,e_n\}\). The resulting lower bound is used in
the reduction of Section~\ref{sec:uor-to-counting}.

\subsection{Problem and adversary construction}

Consider the promise problem
\[
\operatorname{UOR}_n:\{0^n,e_1,\ldots,e_n\}\to\{0,1\},
\]
where
\[
\operatorname{UOR}_n(0^n)=0,
\qquad
\operatorname{UOR}_n(e_i)=1.
\]
We prove the following.

\begin{theorem}[Small-bias unique OR]
\label{thm:uor-lower-bound}
Let $0<\zeta\le 1/2$. Any quantum query algorithm that computes
$\operatorname{UOR}_n$ with worst-case success probability at least
\[
\frac12+\zeta
\]
uses
\[
T=\Omega(\sqrt{\zeta n})
\]
queries.
\end{theorem}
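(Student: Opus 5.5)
We plan to apply the direct final-state form \eqref{eq:madv-direct-final} with a three-eigenvalue adversary matrix on the $(n+1)$-dimensional input space $\Span\{\ket{0^n},\ket{e_1},\dots,\ket{e_n}\}$. Let $\ket{u}=n^{-1/2}\sum_i\ket{e_i}$. We organize the space into the two-dimensional symmetric sector $\Span\{\ket{0^n},\ket{u}\}$ and its $(n-1)$-dimensional complement $\mathcal E^\perp=\{\sum_i c_i\ket{e_i}:\sum_i c_i=0\}$. In the symmetric sector we use the basis $\ket{\pm}=(\ket{0^n}\pm\ket{u})/\sqrt2$. We then set
\[
\Gamma=\ket{+}\!\bra{+}+a\ket{-}\!\bra{-}+b\,\Pi_{\mathcal E^\perp},
\qquad 1<a\le b,
\]
and take $\ket{\delta}=\ket{+}$, so that $W_0=1$. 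The parameters $a,b$ will be tuned to $\zeta$ and $n$. A natural first guess is $a=1+\Theta(\zeta)$ and $b$ a constant or $\Theta(1/\zeta)$; the final choice should come out of the balancing below.

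\textbf{One-query step.} The operator $O_i$ fixes $\ket{0^n}$ and $\ket{e_k}$ for $k\ne i$, and flips $\ket{e_i}$. Hence $O_i=I-2\ket{e_i}\!\bra{e_i}$ is a reflection. Decompose $\ket{e_i}=n^{-1/2}\ket{u}+\sqrt{1-1/n}\,\ket{f_i}$ with $\ket{f_i}\in\mathcal E^\perp$. The query therefore acts nontrivially only on the three-dimensional space $\Span\{\ket{+},\ket{-},\ket{f_i}\}$, and acts as the identity on the rest, where $\Gamma$ is the scalar $b$. As in Theorem~\ref{thm:progress-ratio}, $R_i=\norm{\Gamma^{1/2}O_i\Gamma^{-1/2}}^2$ reduces to the norm of an explicit $3\times3$ matrix. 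We bound it by the same device as Theorem~\ref{thm:normestimate}: write $\Gamma^{1/2}O_i\Gamma^{-1/2}=O_i+(\text{correction})$ and estimate each correction entry. Since $O_i-I=-2\ket{e_i}\!\bra{e_i}$ and $\ket{e_i}$ has overlap $O(n^{-1/2})$ with $\ket{\pm}$, the cross terms between $\ket{\pm}$ and $\ket{f_i}$ are $O(n^{-1/2})$. The term between $\ket{+}$ and $\ket{-}$ is $O(1/n)$. The target estimate is
\[
\log R=O\!\left(\frac{\sqrt{b}-1}{\sqrt n}+\frac{\sqrt a-1}{n}\right)
\]
or a comparable bound. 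The exact shape of the dominant term is what ultimately produces the $\sqrt{\zeta n}$ scaling.

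\textbf{Final-state lower bound.} We want $W_T\ge 1+c\zeta$ or similar, derived directly from the success condition. The reduced state $\rho_I^T$ has diagonal entries $1/2$ on $\ket{0^n}$ and $1/(2n)$ on each $\ket{e_i}$; these are fixed, since queries do not change the input distribution. Its off-diagonal coherence between $\ket{0^n}$ and $\ket{e_i}$ is governed by $\langle\psi_{0}^T|\psi_{e_i}^T\rangle$. Success probability $1/2+\zeta$ forces $\operatorname{Re}\langle\psi_0^T|\psi_{e_i}^T\rangle\le 1-c\zeta^2$, and in the averaged sense also forces a linear deficit. We then compute $\langle\Gamma,\rho_I^T\rangle$ explicitly in terms of $\bar g=\frac1n\sum_i\langle\psi_0|\psi_{e_i}\rangle$ and the Gram matrix of the $\ket{\psi_{e_i}}$. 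The weight on $\ket{-}$ is $\tfrac12(1-\operatorname{Re}\bar g)$, and the weight on $\mathcal E^\perp$ measures the spread of the $\ket{\psi_{e_i}}$.

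\textbf{Expected obstacle.} This final-state step is where I expect the difficulty. A naive fidelity bound gives only a $\zeta^2$ deficit and hence $\Omega(\zeta\sqrt n)$. To reach $\sqrt{\zeta n}$, the plan is to exploit the $\mathcal E^\perp$ weight. Either the $\ket{\psi_{e_i}}$ are nearly identical, in which case they cannot differ much from $\ket{\psi_0}$ given the symmetric constraint, or they spread out, which gives large weight on the eigenvalue-$b$ space. Balancing the two cases by choosing $b$ large, around $1/\zeta$, with $\log R\approx\sqrt{b}/\sqrt n$ then fails to give $\sqrt{\zeta n}$ directly. So I will instead aim for $\log R=O(b/n)$ in the regime $b\le n$, possibly by rescaling the $\ket{-}$ eigenvalue, and for $W_T-1=\Omega(\zeta b)$. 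With $b=\Theta(\sqrt{n/\zeta})$ this yields $T\ge\log W_T/\log R=\Omega(\sqrt{\zeta n})$.
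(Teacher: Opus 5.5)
There is a genuine gap, and it comes from the constraint $1<a\le b$, which puts $\ket{-}$ \emph{below} the sector $\mathcal E^\perp$. Both estimates your final plan rests on are false for this ordering.

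First, $\log R=O(b/n)$ cannot hold. The query couples $\ket{+}$ (eigenvalue $1$) to $\ket{f_i}$ (eigenvalue $b$) with amplitude $s=\sqrt{2(n-1)}/n=\Theta(n^{-1/2})$. Evaluating the quotient in \eqref{eq:madv-query-ratio} at $v=\ket{+}+b^{-1/2}\ket{f_i}$ gives $R_i\ge 1+s(\sqrt b-1/\sqrt b)-O(1/n)$ whatever $a$ is, since the $\ket{-}$ component only adds a nonnegative term to the numerator. Hence $\log R=\Omega(\sqrt{b/n})$. For $b=\Theta(\sqrt{n/\zeta})$ this is $(\zeta n)^{-1/4}$, not $(\zeta n)^{-1/2}$, and rescaling the $\ket{-}$ eigenvalue cannot remove it.

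Second, neither $W_T-1=\Omega(\zeta b)$ nor the ``linear deficit'' follows from the success condition. Let every $\ket{\psi_{e_i}^T}$ equal a single state $\ket{\psi_1}$ with $\braket{\psi_0^T}{\psi_1}=\sqrt{1-4\zeta^2}$, and use the Helstrom measurement. This succeeds with probability exactly $1/2+\zeta$ on every input. It puts zero weight on $\mathcal E^\perp$, has $1-\operatorname{Re}\bar g\le 4\zeta^2$, and has $\ket{-}$-weight $\frac14\norm{\psi_0^T-\psi_1}^2\le 2\zeta^2$. Therefore $W_T\le 1+2(a-1)\zeta^2\le 1+2(b-1)\zeta^2$.

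Your dichotomy (nearly identical $\ket{\psi_{e_i}^T}$ must stay close to $\ket{\psi_0^T}$) is a dynamical intuition. The direct final-state method needs a lower bound on $W_T$ valid for every final configuration compatible with success. With $a\le b$ you are forced to $b\ge a=\Omega(\zeta^{-2})$, and this $\Gamma$ can certify at most $\Omega(\zeta\sqrt n)$.

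The missing idea is to reverse the order and make the top eigenvalue quadratic in the middle one, as the paper does: $\Gamma_q=\ket{+}\!\bra{+}+q\Pi_{\cS}+q^2\ket{-}\!\bra{-}$.

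\emph{One-query step.} The two $\Theta(n^{-1/2})$ couplings, $\ket{+}\leftrightarrow\cS$ and $\cS\leftrightarrow\ket{-}$, now join eigenvalues whose ratio is only $q$. The ratio-$q^2$ coupling $\ket{+}\leftrightarrow\ket{-}$ has size $1/n$ and contributes $(q-1)/n\le\sqrt{q/n}$ when $q\le n$. A Frobenius estimate of $\Gamma_q^{1/2}O_j\Gamma_q^{-1/2}-O_j$ then gives $\log R\le 2\sqrt{10q/n}$. Your first target estimate $\log R=O((\sqrt b-1)/\sqrt n+(\sqrt a-1)/n)$ remains true with $a=b^2$, $b\le n$; only the constraint $a\le b$ has to go.

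\emph{Final state.} Only the $\ket{-}$ weight is needed. Write $\ket{\bar\psi^T}=\frac1n\sum_i\ket{\psi_{e_i}^T}$. Averaging the success condition over $i$ gives $2\zeta\le 1-\norm{\bar\psi^T}^2+2\norm{\psi_0^T-\bar\psi^T}$. Since $1-\norm{\bar\psi^T}^2\le 2(1-\norm{\bar\psi^T})\le 2\norm{\psi_0^T-\bar\psi^T}$, this yields $\norm{\psi_0^T-\bar\psi^T}\ge\zeta/2$ in every case. Hence $W_T\ge 1+\frac{q^2-1}{16}\zeta^2$.

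\emph{Conclusion.} Taking $q=4/\zeta$ gives $W_T\ge 127/64$ and $T=\Omega(\sqrt{n/q})=\Omega(\sqrt{\zeta n})$. The case $\zeta<4/n$ is covered by $T\ge1$. The unavoidable $\zeta^2$ loss in the final state is absorbed by the eigenvalue $q^2$, while the per-query cost only sees $q$.
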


The proof uses only the promise inputs $0^n,e_1,\ldots,e_n$.
We use the query-growth part of the multiplicative-adversary framework from
Section~\ref{sec:madv-framework}, but we do not introduce a bad/good
threshold for this problem.  Instead, the final value $W_T$ is estimated
directly from the success condition.

Let
\[
\cH_I=\Span\{\ket{0},\ket{1},\ldots,\ket{n}\},
\]
where $\ket{0}$ labels the input $0^n$ and $\ket{i}$ labels $e_i$.
Define the uniform positive-input vector
\[
\ket{u}=\frac1{\sqrt n}\sum_{i=1}^n\ket{i},
\]
and
\[
\ket{+}=\frac{\ket{0}+\ket{u}}{\sqrt2},
\qquad
\ket{-}=\frac{\ket{0}-\ket{u}}{\sqrt2}.
\]
Inside the positive-input space
\[
\cH_1=\Span\{\ket{1},\ldots,\ket{n}\},
\]
define
\[
\cS=
\left\{
\sum_{i=1}^n c_i\ket{i}:
\sum_{i=1}^n c_i=0
\right\}.
\]
For $\ket{v}=\sum_i c_i\ket{i}\in\cH_1$,
\[
\braket{u}{v}=\frac1{\sqrt n}\sum_{i=1}^n c_i,
\]
so $\cS$ is exactly the subspace of $\cH_1$ orthogonal to $\ket{u}$.
Hence
\[
\cH_1=\Span\{\ket{u}\}\oplus\cS,
\]
and therefore
\begin{equation}
\cH_I
=
\Span\{\ket{+}\}
\oplus\cS
\oplus\Span\{\ket{-}\}.
\label{eq:1}
\end{equation}

Let
\[
\Pi_u=\ket{u}\!\bra{u}
\]
denote the projector onto the uniform direction in \(\cH_1\).

Let $\ket{\psi_x^t}$ denote the accessible state of the algorithm after $t$
queries on input $x$. We may assume the computation is unitary until the
final two-outcome measurement.

Define the uniform positive-input state
\begin{equation}
\ket{\Psi_u^t}
=
\frac1{\sqrt n}
\sum_{i=1}^n
\ket{i}\otimes\ket{\psi_i^t}.
\label{eq:2}
\end{equation}
Define $\ket{a^t}$ by projecting $\ket{\Psi_u^t}$ onto the uniform direction:
\begin{equation}
(\Pi_u\otimes I_A)\ket{\Psi_u^t}
=
\ket{u}\otimes\ket{a^t}.
\label{eq:3}
\end{equation}
Equivalently,
\begin{equation}
\ket{a^t}
=
(\bra{u}\otimes I_A)\ket{\Psi_u^t}
=
\frac1n\sum_{i=1}^n\ket{\psi_i^t}.
\label{eq:4}
\end{equation}
Since $\ket{\Psi_u^t}$ is normalized and the projections onto
$\Span\{\ket u\}$ and $\cS$ are orthogonal,
\begin{equation}
\norm{(\Pi_{\cS}\otimes I_A)\ket{\Psi_u^t}}^2
=
1-\norm{a^t}^2.
\label{eq:5}
\end{equation}

Now run the algorithm coherently from
\[
\ket{+}
=
\frac1{\sqrt2}\ket{0}
+
\frac1{\sqrt{2n}}\sum_{i=1}^n\ket{i}.
\]
After $t$ queries the joint state is
\begin{equation}
\ket{\Psi^t}
=
\frac1{\sqrt2}\ket{0}\ket{\psi_0^t}
+
\frac1{\sqrt2}\ket{\Psi_u^t}.
\label{eq:6}
\end{equation}

For a parameter $q>1$, define
\begin{equation}
\Gamma_q
=
\ket{+}\!\bra{+}
+q\Pi_{\cS}
+q^2\ket{-}\!\bra{-}.
\label{eq:7}
\end{equation}
The progress function is
\begin{equation}
W_t
=
\bra{\Psi^t}(\Gamma_q\otimes I_A)\ket{\Psi^t}.
\label{eq:8}
\end{equation}
Initially, $\ket{\Psi^0}=\ket{+}\ket{\psi^0}$, so
\begin{equation}
W_0=1.
\label{eq:9}
\end{equation}
Accessible-memory unitaries do not change $W_t$ because they act trivially on
$\cH_I$.

\subsection{One-query progress}

Fix a queried coordinate $j\in[n]$. On the input-label register, the active
Boolean phase query is
\begin{equation}
O_j\ket{0}=\ket{0},
\qquad
O_j\ket{i}=(-1)^{\delta_{ij}}\ket{i}.
\label{eq:10}
\end{equation}
Equivalently, on $\cH_1$,
\[
O_j=I-2\ket{j}\!\bra{j}.
\]
Define
\begin{equation}
\ket{s_j}
=
\sqrt{\frac{n}{n-1}}
\left(
\ket{j}-\frac1{\sqrt n}\ket{u}
\right).
\label{eq:11}
\end{equation}
Then $\ket{s_j}\in\cS$, $\norm{s_j}=1$, and
\begin{equation}
\ket{j}
=
\frac1{\sqrt n}\ket{u}
+
\sqrt{\frac{n-1}{n}}\ket{s_j}.
\label{eq:12}
\end{equation}
If $\ket{v}\in\cS$ and $\braket{s_j}{v}=0$, then
\[
\braket{j}{v}
=
\frac1{\sqrt n}\braket{u}{v}
+
\sqrt{\frac{n-1}{n}}\braket{s_j}{v}
=0,
\]
so $O_j\ket{v}=\ket{v}$. Hence the only nontrivial block is
\[
\Span\{\ket{+},\ket{s_j},\ket{-}\}.
\]

Using
\[
O_j\ket{u}
=
\ket{u}-\frac2{\sqrt n}\ket{j},
\]
we obtain
\begin{equation}
O_j\ket{+}
=
\left(1-\frac1n\right)\ket{+}
-\frac{\sqrt{2(n-1)}}{n}\ket{s_j}
+\frac1n\ket{-},
\label{eq:13}
\end{equation}
\begin{equation}
\begin{aligned}
O_j\ket{s_j}
=&
-\frac{\sqrt{2(n-1)}}{n}\ket{+}
+\left(-1+\frac2n\right)\ket{s_j}\\
&+\frac{\sqrt{2(n-1)}}{n}\ket{-},
\end{aligned}
\label{eq:14}
\end{equation}
and
\begin{equation}
O_j\ket{-}
=
\frac1n\ket{+}
+\frac{\sqrt{2(n-1)}}{n}\ket{s_j}
+\left(1-\frac1n\right)\ket{-}.
\label{eq:15}
\end{equation}
Thus, in the ordered orthonormal basis
$\{\ket{+},\ket{s_j},\ket{-}\}$,
\begin{equation}
O_j=
\begin{pmatrix}
1-\frac1n & -\frac{\sqrt{2(n-1)}}{n} & \frac1n\\[2mm]
-\frac{\sqrt{2(n-1)}}{n} & -1+\frac2n & \frac{\sqrt{2(n-1)}}{n}\\[2mm]
\frac1n & \frac{\sqrt{2(n-1)}}{n} & 1-\frac1n
\end{pmatrix}.
\label{eq:16}
\end{equation}

The matrix elements connecting
\[
\ket{+}\leftrightarrow\cS,
\qquad
\cS\leftrightarrow\ket{-}
\]
are of order $\Theta(n^{-1/2})$, whereas the matrix element connecting
$\ket{+}$ and $\ket{-}$ is of order $\Theta(n^{-1})$.

On the nontrivial three-dimensional block,
\[
\Gamma_q=\operatorname{diag}(1,q,q^2).
\]
The one-query progress ratio on this block is
\begin{equation}
\sup_{v\neq0}
\frac{\bra{v}O_j^*\Gamma_qO_j\ket{v}}
{\bra{v}\Gamma_q\ket{v}}
=
\opnorm{\Gamma_q^{1/2}O_j\Gamma_q^{-1/2}}^2.
\label{eq:17}
\end{equation}
Since $O_j$ is unitary,
\[
\opnorm{O_j}=1.
\]
A direct entrywise computation from \eqref{eq:16} gives
\begin{equation}
\begin{aligned}
&\opnorm{\Gamma_q^{1/2}O_j\Gamma_q^{-1/2}-O_j}_F^2\\
&=
\frac{4(n-1)}{n^2}
\left[
(\sqrt q-1)^2+(1-q^{-1/2})^2
\right]\\
&\quad+
\frac1{n^2}
\left[
(q-1)^2+(1-q^{-1})^2
\right].
\end{aligned}
\label{eq:18}
\end{equation}
When restricting $1\le q\le n$, we have
\[
(\sqrt q-1)^2+(1-q^{-1/2})^2\le 2q
\]
and
\[
(q-1)^2+(1-q^{-1})^2\le 2q^2.
\]
Therefore
\begin{equation}
\opnorm{\Gamma_q^{1/2}O_j\Gamma_q^{-1/2}-O_j}_F^2
\le
\frac{10q}{n}.
\label{eq:19}
\end{equation}
Hence
\[
\opnorm{\Gamma_q^{1/2}O_j\Gamma_q^{-1/2}}
\le
1+\sqrt{\frac{10q}{n}},
\]
and therefore
\begin{equation}
\sup_{v\neq0}
\frac{\bra{v}O_j^*\Gamma_qO_j\ket{v}}
{\bra{v}\Gamma_q\ket{v}}
\le
\left(1+\sqrt{\frac{10q}{n}}\right)^2.
\label{eq:20}
\end{equation}

For the complete Boolean phase oracle,
\[
O
=
\sum_{j=1}^n\sum_{p\in\{0,1\}}
O_j^p\otimes\ket{j,p}\!\bra{j,p}_Q\otimes I_W.
\]
The $p=0$ sector is the identity, while the $p=1$ sectors obey \eqref{eq:20}.
Because different query-register sectors are orthogonal, the standard
multiplicative-adversary argument gives
\begin{equation}
W_{t+1}
\le
\left(1+\sqrt{\frac{10q}{n}}\right)^2W_t.
\label{eq:21}
\end{equation}
Thus
\begin{equation}
W_T
\le
\left(1+\sqrt{\frac{10q}{n}}\right)^{2T}.
\label{eq:22}
\end{equation}
Using $\log(1+x)\le x$,
\begin{equation}
\log W_T
\le
2T\sqrt{\frac{10q}{n}}.
\label{eq:23}
\end{equation}

\subsection{Final-state bound}

From \eqref{eq:3},
\begin{equation}
\ket{\Psi_u^T}
=
\ket{u}\ket{a^T}
+
(\Pi_{\cS}\otimes I_A)\ket{\Psi_u^T}.
\label{eq:24}
\end{equation}
Substituting this into \eqref{eq:6}, and using
\[
\ket{0}=\frac{\ket{+}+\ket{-}}{\sqrt2},
\qquad
\ket{u}=\frac{\ket{+}-\ket{-}}{\sqrt2},
\]
we get
\begin{equation}
\begin{aligned}
\ket{\Psi^T}
={}&
\frac12\ket{+}
\bigl(\ket{\psi_0^T}+\ket{a^T}\bigr)\\
&+
\frac1{\sqrt2}(\Pi_{\cS}\otimes I_A)\ket{\Psi_u^T}\\
&+
\frac12\ket{-}
\bigl(\ket{\psi_0^T}-\ket{a^T}\bigr).
\end{aligned}
\label{eq:25}
\end{equation}
The three terms are mutually orthogonal because their input-label parts lie in
$\Span\{\ket{+}\}$, $\cS$, and $\Span\{\ket{-}\}$, respectively.
Therefore
\begin{equation}
\begin{aligned}
W_T
={}&
\frac14\norm{\psi_0^T+a^T}^2
+
\frac q2
\norm{(\Pi_{\cS}\otimes I_A)\ket{\Psi_u^T}}^2\\
&+
\frac{q^2}{4}\norm{\psi_0^T-a^T}^2.
\end{aligned}
\label{eq:26}
\end{equation}
Using \eqref{eq:5},
\begin{equation}
W_T
=
\frac14\norm{\psi_0^T+a^T}^2
+
\frac q2(1-\norm{a^T}^2)
+
\frac{q^2}{4}\norm{\psi_0^T-a^T}^2.
\label{eq:27}
\end{equation}
Also,
\begin{equation}
\frac14\norm{\psi_0^T+a^T}^2
+
\frac12(1-\norm{a^T}^2)
+
\frac14\norm{\psi_0^T-a^T}^2
=1.
\label{eq:28}
\end{equation}
Hence
\begin{equation}
W_T
=
1
+
\frac{q-1}{2}(1-\norm{a^T}^2)
+
\frac{q^2-1}{4}\norm{\psi_0^T-a^T}^2.
\label{eq:29}
\end{equation}

Let $P_1$ be the projector corresponding to output $1$.
Worst-case success probability at least $1/2+\zeta$ implies
\[
\bra{\psi_i^T}P_1\ket{\psi_i^T}
\ge
\frac12+\zeta
\qquad(i\in[n]),
\]
and
\[
\bra{\psi_0^T}P_1\ket{\psi_0^T}
\le
\frac12-\zeta.
\]
Averaging over $i$ and subtracting gives
\begin{equation}
2\zeta
\le
\Tr\!\left[
P_1\left(
\frac1n\sum_{i=1}^n
\ket{\psi_i^T}\!\bra{\psi_i^T}
-
\ket{\psi_0^T}\!\bra{\psi_0^T}
\right)
\right].
\label{eq:success-average}
\end{equation}

Expanding the average positive-input density matrix gives
\begin{equation}
\begin{aligned}
\frac1n\sum_{i=1}^n
\ket{\psi_i^T}\!\bra{\psi_i^T}
=
\frac1n\sum_{i=1}^n
&(\ket{\psi_i^T-a^T}+\ket{a^T})\\
&\times(\bra{\psi_i^T-a^T}+\bra{a^T}).
\end{aligned}
\label{eq:decompose-psi}
\end{equation}

Using
\[
\ket{a^T}=\frac1n\sum_{i=1}^n\ket{\psi_i^T},
\]
we have
\[
\sum_{i=1}^n
\bigl(\ket{\psi_i^T}-\ket{a^T}\bigr)=0.
\]
Hence by canceling the cross terms of \eqref{eq:decompose-psi},
\begin{equation}
\begin{aligned}
\frac{1}{n}\sum_{i=1}^{n}
\ket{\psi_i^T}\!\bra{\psi_i^T}
={}&
\ket{a^T}\!\bra{a^T}
\\
&+
\frac{1}{n}\sum_{i=1}^{n}
\bigl(\ket{\psi_i^T}-\ket{a^T}\bigr)
\bigl(\bra{\psi_i^T}-\bra{a^T}\bigr).
\end{aligned}
\label{eq:positive-density-decomposition}
\end{equation}
Substituting into \eqref{eq:success-average},
\begin{align}
2\zeta
\le{}&
\bra{a^T}P_1\ket{a^T}
-
\bra{\psi_0^T}P_1\ket{\psi_0^T}
\nonumber\\
&+
\frac1n\sum_{i=1}^n
\bigl(\bra{\psi_i^T}-\bra{a^T}\bigr)
P_1
\bigl(\ket{\psi_i^T}-\ket{a^T}\bigr).
\label{eq:success-split}
\end{align}
Since $0\preceq P_1\preceq I$,
\[
\frac1n\sum_{i=1}^n
\bigl(\bra{\psi_i^T}-\bra{a^T}\bigr)
P_1
\bigl(\ket{\psi_i^T}-\ket{a^T}\bigr)
\le
\frac1n\sum_{i=1}^n
\norm{\psi_i^T-a^T}^2.
\]
Moreover,
\begin{equation}
\frac1n\sum_{i=1}^n\norm{\psi_i^T-a^T}^2
=
1-\norm{a^T}^2.
\label{eq:variance-identity}
\end{equation}
For the first two terms in \eqref{eq:success-split},
\[
\begin{aligned}
&\bra{a^T}P_1\ket{a^T}
-
\bra{\psi_0^T}P_1\ket{\psi_0^T}\\
&=
(\bra{a^T}-\bra{\psi_0^T})P_1\ket{a^T}
+
\bra{\psi_0^T}P_1(\ket{a^T}-\ket{\psi_0^T}),
\end{aligned}
\]
and therefore
\begin{equation}
\left|
\bra{a^T}P_1\ket{a^T}
-
\bra{\psi_0^T}P_1\ket{\psi_0^T}
\right|
\le
2\norm{\psi_0^T-a^T}.
\label{eq:measurement-difference}
\end{equation}
Combining the preceding estimates,
\begin{equation}
2\zeta
\le
1-\norm{a^T}^2
+
2\norm{\psi_0^T-a^T}.
\label{eq:success-bound}
\end{equation}

Since $\norm{\psi_0^T}=1$ and $\norm{a^T}\le 1$,
\begin{align}
1-\norm{a^T}^2
&=
(1-\norm{a^T})(1+\norm{a^T})\nonumber\\
&\le
2(1-\norm{a^T})\nonumber\\
&\le
2\norm{\psi_0^T-a^T},
\label{eq:norm-loss-bound}
\end{align}
where the last step is the reverse triangle inequality.
Together with \eqref{eq:success-bound}, this yields
\begin{equation}
\norm{\psi_0^T-a^T}
\ge
\frac{\zeta}{2}.
\label{eq:final-displacement}
\end{equation}
Consequently,
\begin{equation}
\norm{\psi_0^T-a^T}^2
\ge
\frac{\zeta^2}{4}.
\label{eq:final-displacement-squared}
\end{equation}

By \eqref{eq:29} and \eqref{eq:final-displacement-squared}, it is enough to
choose $q=\Theta(1/\zeta)$. We set
\begin{equation}
q=\frac4\zeta.
\label{eq:q-choice}
\end{equation}
First suppose
\begin{equation}
\zeta\ge\frac4n.
\label{eq:large-bias-regime}
\end{equation}
Then $q\le n$, so the one-query estimate above applies.

Dropping the nonnegative middle term in \eqref{eq:29} and using
\eqref{eq:final-displacement-squared},
\[
\begin{aligned}
W_T
&\ge
1+
\frac{q^2-1}{4}\norm{\psi_0^T-a^T}^2\\
&\ge
1+
\frac14\left(\frac{16}{\zeta^2}-1\right)
\frac{\zeta^2}{4}\\
&=
2-\frac{\zeta^2}{16}
\ge
\frac{127}{64}.
\end{aligned}
\]
Therefore
\begin{equation}
W_T\ge\frac{127}{64}.
\label{eq:final-state-lower}
\end{equation}
Combining \eqref{eq:23} and \eqref{eq:final-state-lower},
\[
\log\frac{127}{64}
\le
2T\sqrt{\frac{10q}{n}}.
\]
Substituting $q=4/\zeta$ gives
\begin{equation}
T
\ge
\frac{\log(127/64)}{4\sqrt{10}}
\sqrt{\zeta n}.
\label{eq:main-lower-bound}
\end{equation}

Finally, if
\begin{equation}
0<\zeta<\frac4n,
\label{eq:small-bias-regime}
\end{equation}
then a zero-query algorithm cannot have positive bias because its output
distribution is independent of the input. Hence $T\ge1$. Since
\[
\sqrt{\zeta n}<2,
\]
we also have
\[
T\ge\frac12\sqrt{\zeta n}.
\]
Combining the two regimes proves
\[
T=\Omega(\sqrt{\zeta n}).
\]

\section{Unique-OR reduction and final bound}
\label{sec:reduction-final}

\subsection{Reduction from unique OR to approximate counting}
\label{sec:uor-to-counting}

\begin{theorem}[Unique-OR reduction lower bound]
\label{thm:counting-reduction-bound}
Every quantum query algorithm that distinguishes Hamming weights \(M\) and
\(M+\Delta\), with \(1\le M<M+\Delta<N\), and succeeds with probability at
least \(1/2+\zeta\), uses
\begin{equation}
Q
=
\Omega\left(
\sqrt{\frac{\zeta N}{\Delta}}
\right)
\label{eq:counting-reduction-bound}
\end{equation}
queries.
\end{theorem}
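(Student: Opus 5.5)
The plan is to reduce $\operatorname{UOR}_n$ to the two-weight counting problem and then invoke Theorem~\ref{thm:uor-lower-bound}. Given $y\in\{0^n,e_1,\ldots,e_n\}$, build $x\in\{0,1\}^N$ in three pieces. First, $\Delta$ consecutive copies of $y$, occupying $n\Delta$ coordinates. Second, a block of $M$ coordinates fixed to $1$. Third, a block of $N-n\Delta-M$ coordinates fixed to $0$. Then $|x|=M$ when $y=0^n$ and $|x|=M+\Delta$ when $y=e_i$, so $\operatorname{UOR}_n(y)=1$ exactly when $|x|=M+\Delta$. This requires $n\Delta+M\le N$, so we set
\[
n=\left\lfloor\frac{N-M}{\Delta}\right\rfloor .
\]

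The query simulation goes as follows. A phase query to a coordinate of $x$ in a copy block is a phase query to the corresponding coordinate of $y$. A query to a fixed coordinate is an input-independent phase, and can be implemented controlled on the query register without touching $y$. Hence a $Q$-query counting algorithm with success $1/2+\zeta$ gives a $Q$-query $\operatorname{UOR}_n$ algorithm with the same worst-case success. Theorem~\ref{thm:uor-lower-bound} then yields
\[
Q=\Omega\!\left(\sqrt{\zeta (N-M)/\Delta}\right),
\]
provided $n\ge1$. When $n=0$ the bound is weak anyway, and we fall back on the trivial bound $Q\ge1$.

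The main obstacle is that this gives $N-M$ rather than $N$. To fix it, apply complementation, which the paper already treats as free: flipping all bits maps the weights to $N-M-\Delta$ and $N-M$ with the same gap. Running the reduction in that orientation (fixed block of $N-M-\Delta$ ones) needs $n\Delta\le M+\Delta$, giving $\Omega\!\left(\sqrt{\zeta(M+\Delta)/\Delta}\right)$. Taking the better of the two orientations, and using
\[
\max\{N-M,\;M+\Delta\}\ge N/2,
\]
we get
\[
Q=\Omega\!\left(\sqrt{\zeta N/\Delta}\right).
\]

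Boundary cases need a check: whenever the chosen $n$ equals $0$, we have $N/\Delta\le 2\max\{\cdot\}/\Delta<4$. The bound is then $O(1)$, since $\zeta\le1/2$, and is covered by $Q\ge1$. The latter holds because a zero-query algorithm has no bias, so the statement holds with an absolute constant.
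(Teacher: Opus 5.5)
Your proposal is correct and is essentially the paper's proof. The paper first complements so that $M+\Delta\le N-M$, which amounts to choosing the orientation in which $N$ minus the lower weight is at least $N/2$, and then uses the same repetition-and-padding map with $N'=\lfloor (N-M)/\Delta\rfloor$; you instead compare both orientations and keep the better one. Your $n=0$ boundary case never actually arises: $M+\Delta<N$ gives $N-M>\Delta$, and $M\ge1$ gives $M+\Delta>\Delta$, so $n\ge1$ in either orientation.
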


\begin{proof}
Complement all input bits and exchange the output labels if necessary so
that
\[
M+\Delta\le N-M.
\]
This does not change \(N\), \(\Delta\), or the query complexity.
Set
\begin{equation}
N'
=
\left\lfloor\frac{N-M}{\Delta}\right\rfloor.
\label{eq:reduction-Nprime}
\end{equation}
For a unique-OR input \(y\in\{0,1\}^{N'}\), define
\begin{equation}
x(y)
=
\underbrace{y\,y\,\cdots\,y}_{\Delta\text{ copies}}
\,0^{\,N-M-\Delta N'}
\,1^M.
\label{eq:uor-counting-map-general}
\end{equation}
If \(y=0^{N'}\), then \(|x(y)|=M\). If \(|y|=1\), each repeated copy
contributes one marked item and \(|x(y)|=M+\Delta\).

A query to \(x(y)\) can be simulated with at most one query to \(y\): a
position in a repeated block determines one coordinate of \(y\), while the
padding zeros and final ones are known. Hence a \(Q\)-query counting
algorithm gives a \(Q\)-query algorithm for \(\operatorname{UOR}_{N'}\).
Theorem~\ref{thm:uor-lower-bound} implies
\[
Q=\Omega(\sqrt{\zeta N'}).
\]

Under \(M+\Delta\le N-M\), we have \(M\le N/2\), so
\(N-M\ge N/2\). Also \((N-M)/\Delta>1\), and therefore
\[
N'
\ge
\frac{N-M}{2\Delta}
=
\Omega\left(\frac N\Delta\right).
\]
Substitution gives \eqref{eq:counting-reduction-bound}. Since the bound
only depends on \(N\) and \(\Delta\), it is unchanged when we translate back
from the complemented instance.
\end{proof}

\subsection{Combined small-bias lower bound}
\label{sec:final-counting}

The two preceding arguments give independent lower bounds for the same
promise problem. Combining
Theorems~\ref{thm:counting-direct-lower-bound} and
\ref{thm:counting-reduction-bound} yields the main result.

\begin{theorem}[Small-bias two-weight counting lower bound]
\label{thm:counting-final}
Let
\[
1\le M<M+\Delta<N,
\qquad
0<\zeta\le\frac12.
\]
Every quantum query algorithm that distinguishes
\[
|x|=M
\qquad\text{from}\qquad
|x|=M+\Delta
\]
with success probability at least \(1/2+\zeta\) uses
\begin{equation}
Q
=
\Omega\left(
\max\left\{
\zeta\frac{\sqrt{(N-M)(M+\Delta)}}{\Delta},
\sqrt{\frac{\zeta N}{\Delta}}
\right\}
\right).
\label{eq:final-two-regime-bound}
\end{equation}
If \(\Delta=\epsilon M\), this is equivalently
\begin{equation}
Q
=
\Omega\left(
\max\left\{
\frac{\zeta}{\epsilon}
\sqrt{\frac{(1+\epsilon)(N-M)}{M}},
\sqrt{\frac{\zeta N}{\epsilon M}}
\right\}
\right).
\label{eq:final-two-regime-epsilon}
\end{equation}
For \(0<\epsilon\le1\) in the convention
\(M+\Delta\le N-M\), the first term simplifies to
\(\Theta((\zeta/\epsilon)\sqrt{N/M})\).
\end{theorem}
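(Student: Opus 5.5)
The statement is the conjunction of Theorems~\ref{thm:counting-direct-lower-bound} and \ref{thm:counting-reduction-bound}, so I would prove it by combining those two results and then translating the bound into the relative-gap form.

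First, fix any algorithm meeting the hypotheses. Theorem~\ref{thm:counting-direct-lower-bound} gives constants with
\[
Q\ge c_1\zeta\frac{\sqrt{(N-M)(M+\Delta)}}{\Delta},
\]
and Theorem~\ref{thm:counting-reduction-bound} gives
\[
Q\ge c_2\sqrt{\zeta N/\Delta}.
\]
Both theorems are valid for the original pair of weights. Each proof already undoes the complementation step, since both bounds are invariant under $M\mapsto N-M-\Delta$. Taking $c=\min\{c_1,c_2\}$, $Q$ is at least $c$ times each term, hence at least $c$ times their maximum. This yields \eqref{eq:final-two-regime-bound}.

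Second, substitute $\Delta=\epsilon M$. The first term becomes
\[
\frac{\zeta}{\epsilon}\sqrt{\frac{(1+\epsilon)(N-M)}{M}},
\]
as computed at the end of the proof of Theorem~\ref{thm:counting-direct-lower-bound}. The second term becomes $\sqrt{\zeta N/(\epsilon M)}$ directly. This gives \eqref{eq:final-two-regime-epsilon}.

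Third, for the simplification under $0<\epsilon\le1$ and $M+\Delta\le N-M$:
\begin{itemize}
\item $1\le 1+\epsilon\le 2$.
\item $M+\Delta\le N-M$ forces $M\le N/2$, so $N/2\le N-M\le N$.
\end{itemize}
Hence $\sqrt{(1+\epsilon)(N-M)/M}$ lies between $\sqrt{N/(2M)}$ and $\sqrt{2N/M}$, so the first term is $\Theta((\zeta/\epsilon)\sqrt{N/M})$ with absolute constants.

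There is no real obstacle here; all of the substance sits in the two cited theorems. The only point needing care is checking that both component bounds hold for the original, uncomplemented weights. I would do this by pointing to the invariance of $(N-M)(M+\Delta)$, and of $N$ and $\Delta$, under complementation, as already noted in each proof.
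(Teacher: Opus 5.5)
Your proposal is correct and follows the paper's route exactly: the paper proves this theorem simply by combining Theorems~\ref{thm:counting-direct-lower-bound} and \ref{thm:counting-reduction-bound}. Your explicit checks of the $\epsilon$-substitution and of the $\Theta$-simplification (via $1\le 1+\epsilon\le 2$ and $N/2\le N-M\le N$) supply details the paper leaves implicit.
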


\begin{acknowledgments}
This project is supported by NSTC under project number 115-2112-M-007 -006 -  and 114-2221-E-007 -083 -MY3, and funding of Taiwan Centers of Excellence (TCE)

OpenAI ChatGPT (GPT-5.6) was used during the development and
preparation of this work to assist with exploring proof strategies,
deriving intermediate mathematical arguments, checking calculations,
organizing the literature, and revising the presentation of the
manuscript. Some proof ideas and intermediate derivations were
initially suggested by the AI tool. The authors independently
examined, reconstructed, and verified all arguments and proofs,
checked the relevant claims and citations against the literature,
and take full responsibility for all results and conclusions.
\end{acknowledgments}

\section*{Data Availability}
No data were created or analyzed in this study.

\bibliography{references}

\end{document}